\documentclass[letterpaper]{article} 
\usepackage{aaai24}  
\usepackage{times}  
\usepackage{helvet}  
\usepackage{courier}  
\usepackage[hyphens]{url}  
\usepackage{graphicx} 
\urlstyle{rm} 
\usepackage{natbib}  
\usepackage{caption} 
\frenchspacing  
\setlength{\pdfpagewidth}{8.5in} 
\setlength{\pdfpageheight}{11in} 

\usepackage{url}
\usepackage{amsmath}
\usepackage{amssymb}
\usepackage{mathtools}
\usepackage{amsthm}
\usepackage{amsfonts} 
\usepackage{bm}
\usepackage{multirow}

\def\PTD{\textsc{PTD}\xspace}
\def\OAK{\textsc{OAK}\xspace}
\def\POAK{\textsc{POAK}\xspace}
\def\POAKi{\textsc{POAKi}\xspace}
\def\SAD{\textsc{SAD}\xspace}
\def\BAU{\textsc{BAU}\xspace}
\def\Uniform{\textsc{Uniform}\xspace}
\usepackage[ruled,algo2e]{algorithm2e} 

\newcommand{\full}[1]{}
\DeclareMathOperator{\argmax}{argmax}

\def\calX{{\mathcal{X}}}
\def\mat{{\mathcal{M}}}

\theoremstyle{plain}
\newtheorem{theorem}{Theorem}[section]
\newtheorem{proposition}[theorem]{Proposition}
\newtheorem{lemma}[theorem]{Lemma}
\newtheorem{corollary}[theorem]{Corollary}
\theoremstyle{definition}

\theoremstyle{remark}
\newtheorem{remark}[theorem]{Remark}

\usepackage[dvipsnames]{xcolor}
\newcount\Comments  
\Comments=0
\definecolor{darkgreen}{rgb}{0,0.6,0}
\definecolor{purple}{rgb}{1,0,1}
\newcommand{\kibitz}[2]{\ifnum\Comments=1{\textcolor{#1}{#2}}\fi}
\newcommand{\rmr}[1]{\kibitz{blue}{[RM:#1]}}
\newcommand{\xc}[1]{\kibitz{red}{[XC: #1]}}

\newcommand{\an}[1]{\kibitz{purple}{[An:#1]}}

%
\usepackage{algorithm}

%
\usepackage{newfloat}
\usepackage{listings}
\DeclareCaptionStyle{ruled}{labelfont=normalfont,labelsep=colon,strut=off} 
\lstset{%
	basicstyle={\footnotesize\ttfamily},
	numbers=left,numberstyle=\footnotesize,xleftmargin=2em,
	aboveskip=0pt,belowskip=0pt,%
	showstringspaces=false,tabsize=2,breaklines=true}
\floatstyle{ruled}
\newfloat{listing}{tb}{lst}{}
\floatname{listing}{Listing}
%
\pdfinfo{
/TemplateVersion (2024.1)
}

\setcounter{secnumdepth}{1} 

%


\title{Efficient Online Crowdsourcing with Complex Annotations}
\author{
    Reshef Meir\textsuperscript{\rm 1,}\thanks{Work done while visiting Meta's Central Applied Science},
    Viet-An Nguyen\textsuperscript{\rm 2},
    Xu Chen\textsuperscript{\rm 2},
    Jagdish Ramakrishnan\textsuperscript{\rm 2},
    Udi Weinsberg\textsuperscript{\rm 2}
}
\affiliations{
    


    \textsuperscript{\rm 1}Technion---Israel Institute of Technology\\
    \texttt{reshefm@dds.technion.ac.il}\\
    \textsuperscript{\rm 2}Central Applied Science, Meta\\
    \texttt{\{vietan,xuchen2,jagram,udi\}@meta.com}
}

\newcommand{\newsec}[1]{\vspace{-0mm}\section{#1}}
\newcommand{\newsubsec}[1]{\vspace{-0mm}\subsection{#1}}
\newcommand{\newpar}[1]{\vspace{-0mm}\paragraph{#1}}

\begin{document}

\maketitle
\begin{abstract}
Crowdsourcing platforms use various truth discovery algorithms to aggregate annotations from multiple labelers. In an online setting, however, the main challenge is to decide whether to ask for more annotations for each item to efficiently trade off cost (i.e., the number of annotations) for quality of the aggregated annotations. In this paper, we propose a novel approach for general complex annotation (such as bounding boxes and taxonomy paths), that works in an online crowdsourcing setting. We prove that the expected average similarity of a labeler is linear in their accuracy \emph{conditional on the reported label}. This enables us to infer reported label accuracy in a broad range of scenarios. We conduct extensive evaluations on real-world crowdsourcing data from Meta and show the effectiveness of our proposed online algorithms in improving the cost-quality trade-off.
\end{abstract}

\newsec{Introduction}
\label{sec:intro}

\textit{Crowdsourcing} refers to a broad collection of cost-efficient methods to acquire information from a large population of non-experts~\cite{DoanCACM2011crowdsourcing}. 
Within crowdsourcing, a common task is to ask \textit{workers} (aka \textit{reviewers} or \textit{labelers}) to provide a correct \textit{annotation} (aka \textit{label}) to a piece of information.
 
The annotations can be simple such as a yes/no answer to whether there is a car in a given photo, or a real number representing the future price of a commodity. However, in many crowdsourcing tasks, the responses from labelers may comprise of more \textit{complex annotations} such as textual spans, bounding boxes, taxonomy paths, or translations.

These annotations are then aggregated to obtain the best possible estimation of some underlying correct answer. One typical approach to aggregate multiple annotations is to identify good labelers based on custom, domain-specific statistical models (see Related Work). 
While these specialized models were shown to be effective for their respective tasks, one key limitation for such approach is that a custom algorithm is needed for each type of complex annotation. 
Recent work has explored pairwise similarities between annotations as a general approach for identifying good labelers across different types of complex annotations~\cite{braylan2023general,meir2023easy}. The assumption that good workers are similar to one another in terms of their reported annotations (whereas poor workers do not) is referred to as the \textit{Anna Karenina principle} in \citet{meir2023easy}.

\newsubsec{Challenges and Contributions}
\label{sec:challenges_contributions}
\begin{figure}[t]
    \centering\vspace{-0mm}
    \includegraphics[width=\linewidth]{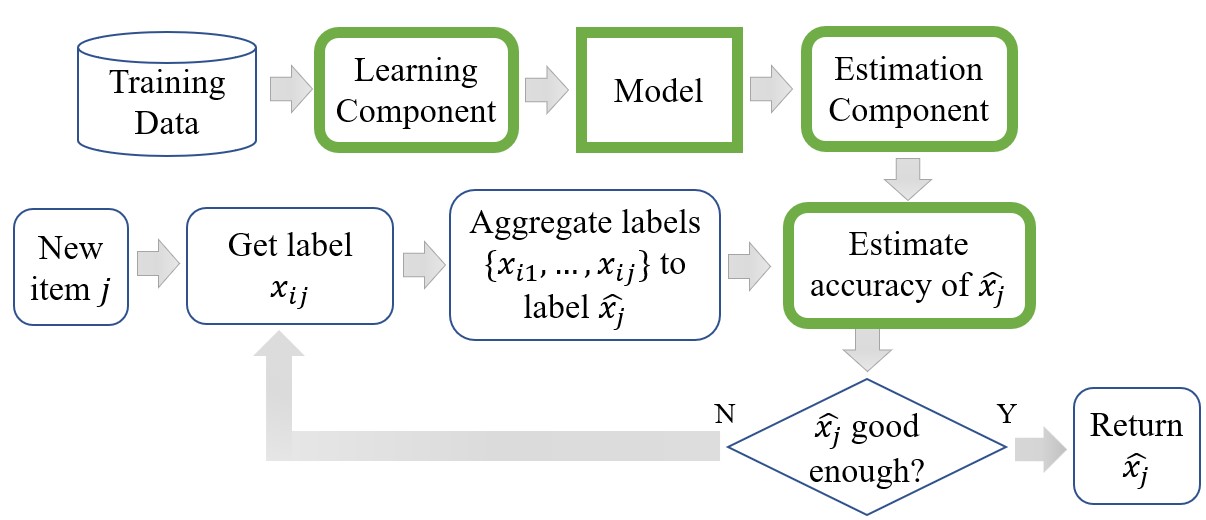}
    \caption{General online crowdsourcing process, in which the components with green solid frame are this work's focus.}
    \label{fig:online_general}\vspace{-0mm}
\end{figure}

The focus of this paper is on a prevalent and practical crowdsourcing scenario where on the one hand, annotations can be \textit{complex}, and on the other hand we must decide \textit{on-the-fly} whether we should get an additional annotation for each item. 
In this setting, unfortunately, most of the truth discovery algorithms (mentioned above and surveyed in later sections) are inapplicable. We next briefly describe the online crowdsourcing process of interest to articulate the challenge.

\newpar{From offline to online crowdsourcing} The stylized truth discovery model often used in crowdsourcing literature assumes the existence of an \textit{a priori} given set of questions/items to annotate and a fixed group of workers, each assigned to annotate all or some of the items. The designed algorithm takes these observed annotations as offline input and outputs the best aggregated annotation for each item.

However, practical real-time crowdsourcing processes exhibit a distinct setup. As illustrated in Fig.~\ref{fig:online_general}, items arrive sequentially and the system determines when to stop collecting labels for each item. This decision can be informed by the annotations collected so far for this item, and by the knowledge learned from previous annotations for other items used as ``training data''. The CLARA system developed at Meta tackles the same online crowdsourcing problem but only supports categorical labels~\cite{NguyenKDD2020}. \emph{Task assignment}, which selects labelers to annotate items, is an active area of research~\cite{LiTKDE2016crowdsourced,hettiachchicsur2022survey}, but is out of the scope of this work, as it is addressed by a different system. 

\newpar{Main contributions} Based on the Anna Karenina (AK) principle as explained above, we suggest an \textit{Online AK algorithm} (\OAK) that estimates the accuracy of each labeler by measuring the average similarity to all others on training data.
OAK is an adaptation of the Proximity-based Truth Discovery (\PTD) algorithm proposed by \citet{meir2023easy} for the online crowdsourcing setting with general complex annotations. The main difference is the way average similarity is \emph{used}, which is to decide on when to stop acquiring labels, rather than how to weigh collected labels. 

Our main contribution is a \textit{Partition-based extension} of \OAK (\POAK).  
Although \POAK can be viewed as employing multiple instances of \OAK for each reported label type, it offers a more effective means of handling dependencies within the data. From a theoretical standpoint, \POAK deviates from the independence assumptions which underpin the AK principle theory presented in \citet{meir2023easy}.
Therefore to provide theoretical foundations for the \POAK algorithm, we establish a stronger version of the AK principle which encompasses per-reported-type estimations. From an empirical perspective, 
we show that the proposed algorithm substantially improves the cost-accuracy trade-offs compared with the baselines on several real-world datasets from various domains collected at Meta. We also propose a third variant, \POAKi, which reduces the number of latent variables by incorporating item response theory (IRT)~\cite{BakerBook2004item}.

\newsubsec{Related Work}
\label{sec:related_work}

\newpar{Competence estimation} 
Given that crowdsourcing workers may possess vastly different capabilities due to differences in their inherent competence, training, or effort, it is crucial for crowdsourcing models to learn and account for worker accuracy in order to enhance ground truth estimation
~\cite{
IpeirotisDMKD2014,ZhengVLDB17}. 

In its simplest form, worker competence is captured by a single real number which represents the ability that each worker correctly answers a task~\cite{whitehill2009whose,karger2011iterative}. For categorical labels, one common way to characterize workers' performance is using a confusion matrix to model their ability to correctly label items of different categories~\cite{dawid1979maximum,RaykarJMLR2010,KimAISTATS2012}. Another line of work uses a multidimensional vector to model the diverse skill sets of each worker~\cite{WelinderNIPS2010,ZhouNIPS2012,MaKDD2015faitcrowd}. \rmr{perhaps fewer citations? \an{I've removed a bunch of long references.}}\rmr{I think two examples each time are enough :)}


Naturally, estimation in the above approaches employs statistical analysis that assumes specific label structure---typically multiple-choice questions or a real-valued number---and usually also a particular noise model. 

\paragraph{Average similarity}
The idea of using average similarity of workers as a rough proxy for their competence had also been analyzed in specific domains and applied in practice in \emph{Games with a Purpose}~\cite{von2008designing}. Benefits of the average similarity approach were demonstrated theoretically and empirically in specific offline domains  including  abstractive-summarization models~\cite{kobayashi2018frustratingly} and binary labels~\cite{kurvers2019detect}.

\newpar{Complex annotations} As discussed above, the literature on aggregating complex annotations consists of many task-specific specialized models. \citet{NguyenACL2017} propose a HMM-based model for aggregating sequential crowd labels, which was applied to named-entity recognition and information extraction applications. \citet{linUAI2012crowdsourcing} introduce \textsc{LazySusan} to infer the correct answer of crowdsourcing tasks that can have a countably infinite number of possible answers. \citet{BransonCVPR2017lean} propose a model for several complex domains. 
Various other custom models were also proposed for different crowdsourcing applications such as co-reference~\cite{paunEMNLP2018probabilistic,liCOLING2020neural}, 
sequence~\cite{rodriguesMLJ2014sequence}, and translation~\cite{zaidanACL2011crowdsourcing}.

Recent work leverages pairwise similarity as a general abstraction for complex annotations. The multidimensional annotation scaling (MAS) algorithm~\cite{braylan2020modeling} embeds the pairwise distances in a different space and estimates the competences that maximize the likelihood of observed distances. In a followup paper, the authors suggest a general way to aggregate complex labels~\cite{Braylan2021aggregating}. \citet{kawase2019graph} apply graph algorithms to find the `core' of the similarity graph, which is assumed to contain the most competent workers. 
The simplest approach proposed by \citet{meir2023easy} was shown to perform consistently well compared to the other approaches in an offline setting.

\newpar{Online crowdsourcing} The online crowdsourcing process that we focus on has been studied under different names in the literature including repeated labeling~\citep{DaiAI2013pomdp,IpeirotisDMKD2014,LinHCOMP2014relabel}, adaptive stopping~\cite{AbrahamSIGIR2016}, and incremental relabeling~\cite{DrutsaSIGMOD2020crowdsouringtutorial,DrutsaWSDM2020crowdsourcingtutorial}, in which various sophisticated algorithms were developed to guide the relabeling process. However, all of these methods, including the recent CLARA system developed at Meta~\cite{NguyenKDD2020}, only support simple categorical labels. To the best of our knowledge, our paper is the first to study a general approach for online crowdsourcing with general annotations.


\newsec{Preliminaries}
\label{sec:prelim}

\newpar{Labels} A label is an element of some predefined set $\calX$, which can be either finite or infinite in nature. The most common problems are either categorical, where $\calX$ is some finite set of exclusive alternatives (e.g. male/female, types of fruits, names of authors, etc.); or real-valued, where $\calX$ are real numbers (representing temperature, price, etc). In this paper, `annotation' and `label' are used interchangeably.

\newpar{Similarity} The relation among different possible labels is captured by a \emph{similarity function} $s:\calX\times\calX \rightarrow \mathbb [0,1]$. For example, a commonly used similarity for categorical labels is the Hamming similarity, that is, $s(x,x')=1$ if $x=x'$ and otherwise $0$. 

\newsubsec{Classic Truth Discovery}
\label{sec:TD}
\newcommand{\hatm}[1]{\hat{\bm{#1}}}

In a classic truth discovery problem, input is given by a (possibly partial) $n \times m$ table $X$, where $x_{ij}\in \calX$ is the label reported for item $j\leq m$ by worker $i\leq n$. In addition, each item $j$ has a true answer $z^*_j\in \calX$. A \emph{classic truth discovery algorithm} is essentially a function  mapping input tables to a vector of predicted answers $\hatm z =(\hat z_j)_{j\leq m}$.

\newpar{Evaluation}
We evaluate the accuracy of an answer $z_j$ by its similarity to $z^*_j$. The accuracy of $\hatm z$ is simply the average over all items, i.e.:
$ACC(\hatm z,\bm z^*) :=  \frac1m\sum_{j=1}^m s(\hat z_j,z^*_j)$.

\newpar{Average similarity and the AK principle}
\label{sec:PTD}
Given full input $X$,\footnote{We explain below the extension to partial data.} the average similarity of worker $i$ is 
\begin{equation}\label{eq:pi_full}
    \pi_i := \frac{1}{n-1}\sum_{i'\neq i} \frac1m \sum_{j\leq m} s(x_{ij}, x_{i'j}).\vspace{-0mm}
\end{equation}
The average similarity can be computed directly from the input data. In contrast, we often assume that each worker has some intrinsic, unobserved competence that determines her accuracy. 
We define the competence as $c_i:=E[s( x_i, z^*)]$, but note that for this to be well-defined, we need a specific noise model---a distribution connecting the ground truth $z^*$ with the worker's label $x_i$. 

For example, a common noise model for binary labels is the \emph{one-coin} model (also known as the binary  Dawid-Skene~\shortcite{dawid1979maximum} model), where $x_{ij}=z^*_j$ with some fixed probability $p_i$, independently for every item. Note that under the one-coin model and Hamming similarity, the competence of each worker $i$ is exactly $c_i=p_i$.  For the one-coin model, a linear connection between $c_i$ and $E[\pi_i]$ was shown by \citet{kurvers2019detect}. In \citet{meir2023easy}, it is shown that this linearity holds for any label type, exactly or approximately, under a wide range of noise models as per the Anna Karenina principle. Their \PTD algorithm is essentially:
\begin{enumerate}
    \item Calculate $\pi_i$ for each worker;
    \item Apply a linear transformation to get $\hat c_i$ from $\pi_i$;
    \item Get each $\hat z_j$ by a weighted aggregation of the labels $(x_{ij})_{i\leq n}$, where weights depend on $(\hat c_i)_{i\leq n}$.
\end{enumerate}

\full{The default implementation of the algorithm simply uses $\hat c_i = \pi_i$ as weight. However the authors also suggest other implementations (i.e. with specific transformations in steps 2 and 3) that are optimal under particular noise models. }

\begin{table}
\begin{tabular}{cl}
    \textbf{Notation} & \textbf{Description} \\
    \hline
    $N, M$ & Set of unique workers / items \\
    $x_{ij}$ & Observed label by worker $i$ to item $j$  \\
    $z^*_j, \hat z_j $ & True / estimated annotation for item $j$ \\
    $c_i, \hat c_i$ & True / estimated  competence of worker $i$\\
    $\hat c^0_i$ & Competence estimated from true labels\\
    $M_i,M^*_i$ & Items [with ground truth] labelled by $i$\\
    $M_{ii'}$ & Items labeled by both $i$ and $i'$\\
    $\overline m_i$ & Number of pairwise comparisons involving $i$\\
    $m_i,m^*_i, m_{ii'}$ & Size of $M_i,M^*_i, M_{ii'}$ \\
    $\pi_i$ & Average similarity of $i$ to other workers\\
    \hline
\end{tabular}
\caption{Common notations used in the paper. Lower-case letters $n,m$ represent set sizes, e.g. $m^*_i=|M^*_i|$.}\label{tab:notation}
\end{table}

\newpar{Partial data}
In general, every item may be labelled by a subset of workers. We denote by $N_j :=\{i\in N: x_{ij} \text{ exists}\}$ the set of workers labelling item $j$. Similarly, $M_i :=\{j\in M: x_{ij} \text{ exists}\}$ is the set of items labelled by worker $i$. We also denote $M_{ii'}:=M_i \cap M_{i'}$. Note that Eq.~\eqref{eq:pi_full} only applies for full data. For a general partial matrix, we compute $\pi_i$  by taking the average similarity over every label in every $M_{ii'}$. 


\newsubsec{Online Crowdsourcing}
\label{sec:DMR}
In an online setting, there is no predefined set of items and workers. Instead, there is a dynamic pool of workers, and items arrive sequentially. Upon arrival of an item $j$, we can ask for a label $x_{ij}$ from worker $i$. In this paper, we assume the decision on which worker to ask is made by a separate system that we have no control over. 
Therefore, an online truth-discovery system  makes the following  decisions for each item: (1) aggregate collected labels, (2) estimate the accuracy of individual/aggregated labels, and (3) decide whether we should stop labeling. See Fig.~\ref{fig:online_general} for a diagram of the labeling process. 


Due to constraints on labeling capacity, in most practical tasks the number of collected labels per item is at most 3, and thus aggregation is rather straightforward. Therefore, our goal in this paper is to obtain a good estimation of the label accuracy. Moreover, we will focus on the first decision point as it is most crucial in our crowdsourcing tasks.

\newpar{Cost-quality trade-off}
The main implication is that the task reduces to estimating label quality given  available information. The performance  is then measured by the system cost-quality trade-off: how many labels per item are needed to reach a certain level of quality. An example trade-off curve for the \POAK algorithm is presented in Fig.~\ref{fig:dmr_example}. The selection of threshold is a case-by-case decision that hinges on labeling costs and the acceptable margin of error~\cite{NguyenKDD2020}.


\begin{figure}
    \centering
    \includegraphics[width=0.8\linewidth]{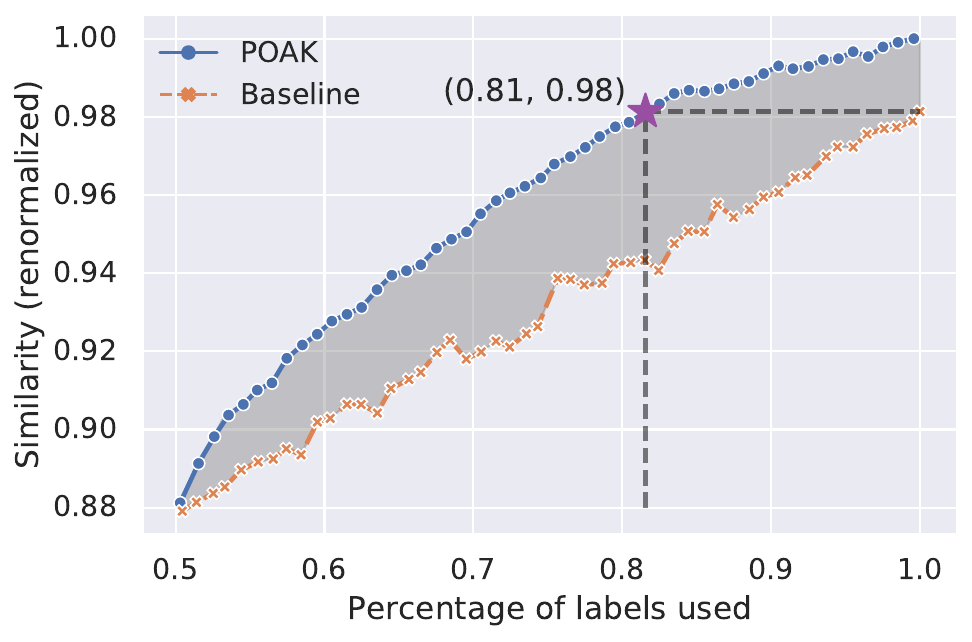}
    \caption{
    Performance of the \POAK algorithm on the Keypoints dataset, compared to a baseline that decides randomly on how many labels to use. Each point on the curve corresponds to the percentage of labels used and associated similarity at a given accuracy threshold. The star marker indicates that \POAK achieves on-par accuracy with the baseline only using 81\% of the labels. The shaded area (relative AUC) measures the improvement over the baseline.}
    \vspace{-0mm}
    \label{fig:dmr_example}
\end{figure}



\newpar{Auditor labels}
Some items may arrive with an `auditor label', that we can think of as the ground truth. The decision on whether to ask for an auditor label on a particular item is done independently by a different system, and we assume these labeled items are a random sample from all items. We denote by $M$ the set of all items, and by $M^* \subseteq M$ the items for which an auditor label is available. The auditor label is denoted by $z^*_j$ (i.e. it is considered to be the ground truth).  


\newpar{Online algorithms}
An online accuracy estimation algorithm is composed of two components (see Fig.~\ref{fig:online_general}):


\begin{description}
    \item \textbf{A learning component} that gets as its input a partial matrix $X$, possibly with some auditor labels, and outputs a model $\Theta$. 
    \item \textbf{An estimation component} that gets as input a set of reported labels for a particular item $(x_{ij})_{i\in N_j}$ and has access to the model $\Theta$. It outputs an aggregated answer $\hat z_j$ with its estimated accuracy $\hat C_j$.
\end{description}
\newsec{Online Anna Karenina Algorithm}
\label{sec:average}

We start with a simple model that only includes the estimated competence of each worker. 
Note that a straight-forward way to estimate the competence, is to consider the average accuracy over items with auditor label:
$\hat c^0_i := \frac{1}{| M^*_i|}\sum_{j \in M^*_i}s(x_{ij}, z^*_j)$. Clearly, if $M^*_i$ is sampled randomly from $M$, then $\hat c^0_i$ is an unbiased estimator of $c_i$. However, typically auditor labels are expensive and hence $M^*_i$ is small or empty. 

\newsubsec{The Learning Component}
\label{sec:learning}

The first step is to calculate $\pi_i$ for each worker. 
Note that under a full input matrix, all workers are treated the same as in  Eq.~\eqref{eq:pi_full}. However in practice a worker may have co-labelled many items with some workers, and just a few with others, resulting in noisy pairwise similarity with the latter group. We therefore compute 
$$\pi_i:=\frac{1}{\overline m_i}\sum_{i'\neq i}\sum_{j\in M_{i}}s(x_{ij},x_{i'j})$$ 
where $\overline m_i:=\sum_{i'\neq i}m_{ii'}$, that is, we average over all pairs of $i$'s label and another label of the same item.

\rmr{I removed the part on weight T() since it is not what we implemented anyway (I wanted to but did not have time)}



\newpar{Calibration and semi-supervised learning}
While $\pi$ and $c$ are positively correlated, there may be a better linear transformation between them than simply using identity. Indeed, if we have some additional statistical assumptions on the data we can analytically derive such a transformation~\cite{kurvers2019detect,meir2023easy}, but having access to a small amount of audited labels allows us to take an easier and more general approach. 

Recall that we have our preliminary competence estimation $\hat c^0$ that is based on the supervised data. While on their own they may be too noisy, we can use them to calibrate the competence estimation, by computing the best linear transformation $L$ from $\pi= (\pi_i)_{i\in N}$ to $(\hat c^0_i)_{i\in N}$. This transformation has only two latent variables (slope and intercept) so even a small amount of supervised data is sufficient. 
If auditor labels are available, we can also combine $\hat c^0_i$ and $\pi_i$ to get a better estimate of $c_i$. While this aspect is crucial in practice, it is also relatively straightforward, so we defer the details to the Appendix
. We summarize the steps of the learning component of our \OAK algorithm in Alg.~\ref{alg:OAK}.

\full{For this we use weighted linear regression, where the weight of every `sample' $(\pi_i,\hat c^0_i)$ is $|M^*_i|$. 
}

\par
\begin{algorithm}[t]
\caption{\OAK\textsc{(Learning Component)}}
\label{alg:OAK}
\KwIn{dataset $X$}
\KwOut{model $\Theta$} 
\For{$i\in N$} {
    
Set  $\overline m_i := \sum_{i'\neq i}|M_{ii'}|$\;

Set $\pi_i:=\frac{1}{\overline m_i}\sum_{i'\neq i}\sum_{j\in M_{i}}s(x_{ij},x_{i'j})$\;

Set $\hat c^0_i := \frac{1}{|M^*_i|}\sum_{j \in M^*_i}s(x_{ij}, z^*_j)$\;
}
Find the best linear transformation $L$ from $\pi$ to $c^0$ using weighted linear regression\;

For each $i\in N$, set $\hat c_i :=  L(\pi_i)$\;

\Return $\Theta := (\hat c_i, \overline m_i)_{i\in N}$\; 
\end{algorithm}

\rmr{Maybe we should remove all references to auditor labels and calibration from the main text? it is not required, and the effect in practice is small. We can keep it in the appendix. }
\newsubsec{The Estimation Component}
Estimation in \OAK is rather straightforward. 
Suppose that we have a new item $j$ with labels $(x_{ij})_{i\in N_j}$. First, the algorithm calculates the current estimated label by aggregating all current labels $\hat z_j := agg((x_{ij})_{i\in N_j})$. Recall that the aggregation function is decided up front. Then the algorithm predicts the accuracy of $\hat z_j$ (denoted $\hat C_j$) using the model $\Theta$. This  estimation is then compared to a pre-defined threshold; see Fig.~\ref{fig:online_general}. 

\newpar{Estimating accuracy of a single label}
The most important part is to estimate the accuracy of the first label, since asking for another label will at least double the cost. Here aggregation is trivial, as $\hat z_j = x_{ij}$ for the first worker $i\in N_j$.

The easiest estimation is just by setting $\hat C_j := \hat c_i$, i.e. relying completely on the estimated competence of the reporting worker.
To deal with small samples we apply additive smoothing, which shrinks the estimation towards the mean accuracy of the entire population, defined as $\bar c:=\frac{1}{m}\sum_{i\in N_j}m_i \hat c_i$, see Alg.~\ref{alg:OAK_predict}. The $\gamma$ is a meta-parameter, set to 10 by default.


\newpar{Aggregated labels}
Estimating the accuracy of an aggregated label is tricky. 
A naive approach that returns the estimated accuracy of one worker only is missing important information: if the other reported labels are identical or similar, this is a strong positive signal, whereas if we know that other workers reported different labels this may suggest lower accuracy. The solution is detailed in the Appendix.


\begin{algorithm}[t]
\caption{\textsc{\OAK (estimation Component)}}
\label{alg:OAK_predict}
\KwIn{Labels $(x_{ij})_{i\in N_j}$, model $\Theta$}
\KwOut{Estimated label $\hat z_j$, estimated accuracy $\hat C_j$} 

Aggregate labels $\hat z_j := agg((x_{ij})_{i\in N_j})$\;

Find closest worker $i^* := \arg \max_{i\in N_j}s(x_{ij},\hat z_j)$\;

Calculate $\hat C_j := \frac{\overline m_{i^*}}{\overline m_{i^*}+\gamma}\hat c_{i^*} + \frac{\gamma}{\overline m_{i^*}+\gamma}\bar c $\;

\Return $(\hat z_j, \hat C_j)$\;
\end{algorithm}

\begin{algorithm}[t]
\caption{\textsc{\POAK (estimation Component)}}
\label{alg:POAK_predict}
\KwIn{Labels $(x_{ij})_{i\in I_j}$, model $\Theta= (\Theta^{(\ell)})_{\ell\leq k}$}
\KwOut{Estimated label $\hat z_j$, estimated accuracy $\hat C_j$} 

Aggregate labels $\hat z_j := agg((x_{ij})_{i\in I_j})$\;

Set $\ell$ such that $\hat z_j \in \calX^{(\ell)}$\;

Find closest worker $i^* := \argmax_{i\in N_j}s(x_{ij},\hat z_j)$\;

Calculate $\hat C_j := \frac{\overline m^{(\ell)}_{i^*}}{\overline m^{(\ell)}_{i^*}+\gamma}\hat c^{(\ell)}_{i^*} + \frac{\gamma}{\overline m^{(\ell)}_{i^*}+\gamma}\hat c_i$\;

\Return $(\hat z_j, \hat C_j)$\;
\end{algorithm}

\newsubsec{Algorithm Complexity}
On time complexity, we need to calculate pairwise similarity between labels within an item for all items. Given that each item contains only a few labels, the time complexity is linear in number of items $\mathcal{O}(M)$, or total number of labels $\mathcal{O}(\sum_{i}\sum_{i'}|M_{ii'}|)$. Regarding memory complexity, we need to store the estimated confidence of each labeler per annotation type which is $\mathcal{O}(kN)$. As both the time and memory complexities are linear, we believe that our algorithm can be implemented efficiently with little resource concern.


\newsec{Annotation Types}
To demonstrate the main issue we tackle in this paper, consider a population with two types of workers and two equally-frequent labels: type~$a$ always identify label $A$ correctly, and type~$b$ always identify label~$B$ correctly. Each worker makes mistakes on the other label with probability $0.5$. Assuming equal priors, each worker has an overall accuracy of $c_i=0.75$, and every worker has an expected average similarity of $E[\pi_i]= (0.75+0.5)/2 = 0.625$.\footnote{Since in case the label matches the type agreement is $1$ with her own type and $0.5$ with the other type (overall $0.75$) and if label mismatches type then agreement is $0.5$ with any other worker.}  In particular \OAK is not able to distinguish between workers since both are equally competent, and can only assess the accuracy of the first label as $0.625$, regardless of the label or the identity of the worker. However if the reported label is $B$ and we know that the worker is of type~$a$,  then we could tell for sure that the label is correct. In contrast, if the worker is of type~$b$ then we know the expected accuracy is only $2/3$.

A different reason that can cause a similar problem is label frequency. Suppose all workers are correct with probability $c_i=2/3$ regardless of the label, and label $A$ is five times more frequent than $B$. Clearly, a reported label $B$ is much less likely to be correct than a reported label $A$ (the posteriors are $2/7$ vs. $10/11$ respectively). Yet the simple \OAK algorithm will predict the same accuracy in both cases.  

\newsubsec{Partition-based \OAK Algorithm}

We propose a general approach to deal with the above issues without explicitly assuming the underlying model or priors. Our approach is based on a conditional application of the AK principle. 
We will first describe the modified algorithm, and then explain the theoretical justification. 
In short, we first partition the space of labels into $k$ types $\calX = \uplus_{\ell=1}^k \calX^{(\ell)}$.

The partition itself is decided externally using domain specific knowledge, where the guiding principle is that  similar labels should be grouped together. For example, if there are few categorical labels then each category is a separate type; and if annotations are free-form sentences, the type can be determined by some syntactic feature of the label or some classification of the words within. Alternatively, it is feasible to employ clustering algorithms to automatically determine label types.

\full{It is also possible to automatically determine label types using some clustering algorithm---this whole process is transparent for our model.  }

Building upon this partitioning idea, instead of assigning a single latent variable per worker to represent their accuracy, we assign $k$ latent variables for each worker, where $c_i^{(\ell)}:=E[s(x_i,z^*)| x_i \in \calX^{(\ell)}]$ is $i$'s conditional accuracy for label $\ell$. Note that we condition on the reported label rather than the true label. \full{In the special case of categorical labels, $c_i^{(\ell)}$ is exactly the posterior probability that a reported label $\ell$ is correct, when reported by worker~$i$.}

\full{
\newpar{A larger model}
For categorical data (say, with $k$ categories) the discussion above suggests a rather straightforward solution. We can extend the model so as to learn both prior probabilities and full $k\times k$ confusion matrices for each worker. However such a drastic complication of the model is both infeasible (unless we have a huge amount of data for every worker), and not generalizable to complex labels.
}

The learning component of the \POAK algorithm applies Alg.~\ref{alg:OAK} on each $\calX^{(\ell)}$ separately, in order to get a conditional average similarity $\pi^{(\ell)}_i$ and conditional accuracy estimate $\hat c^{(\ell)}_i=L^{(\ell)}(\pi_i^{(\ell)})$ for every worker $i\in N$ and label type $\ell\leq k$. The new, larger model will then contain $(\hat c_i^{(\ell)},\overline m_i^{(\ell)})_{i\in N, \ell\leq k}$. Note that there is no explicit estimation of labels' priors or noise model parameters. In its estimation component, the \POAK algorithm picks the type $\ell$ of the aggregated label, and sets the estimated accuracy $\hat C_j$ according to $\hat c_i^{(\ell)}$, with additive smoothing towards $\hat c_i$. See Alg.~\ref{alg:POAK_predict}.




\newsubsec{Theoretical Justification}
To justify the \POAK algorithm, we need to show that $\pi_i^{(\ell)}$ is linear in $c_i^{(\ell)}$ in expectation. \full{In other words, that the AK principle applies even when conditioning on the reported label (or more generally on its type). } This is not \textit{a priori} clear, as one of the assumptions in \citet{meir2023easy} is  that labels from workers are independent conditional on their accuracy---an assumption that is violated once conditioning on the reported label.  

\full{
Now, it can be shown that the proof of the main theorem in \cite{meir2023easy} still goes through if we condition on $x_{ij}$. However this would neither provide us with an accurate linear relation, nor it would work for the other theorems, as they make some explicit assumptions on the distribution of labels. }

We therefore establish an alternative version of the Anna Karenina principle for conditional categorical labels. In particular we provide sufficient conditions to an exact linear relation between $c_i^{(\ell)}$ and $E[\pi_i^{(\ell)}]$, which is positive if (but not only if!) the overall population accuracy is better than random guess. 


\newpar{Statistical model}
We adopt the general Dawid-Skene model for categorical labels~\cite{dawid1979maximum}:
\begin{itemize}
    \item Prior probabilities over labels, denoted by $q^{(\ell)}$ with $\sum_{\ell\leq k} q^{(\ell)}=1$;
    \item Worker types. A type is specified by a $k\times k$ confusion matrix $\mat_i$, where $\mat^{\tau\rightarrow \ell}_i:=Pr[x_i =  \ell| z^* = \tau]$. We require $\sum_{\ell\leq k}\mat_i^{\tau\rightarrow \ell}=1$ for all $i,\tau$; 
    \item Conditional independence among workers: $Pr[x_i = \ell | z^* = \tau, x_{i'}]=\mat^{\tau\rightarrow \ell}_i$.
\end{itemize}
\citet{meir2023easy} imposes a strong restriction on the model where each $\mat_i$ depends only on the accuracy of $i$, which is a scalar parameter, in order to derive the corresponding linear relation. 
We make no assumption on the confusion matrices, and ask how expected conditional similarity behaves as function of the conditional accuracy $c_i^{(\ell)}$. 

Denote by $p^{(\ell)}_i:=\mat_i^{\ell\rightarrow\ell}$ the probability of $i$ correctly identifying label $\ell$. We define a \emph{partial type} $\mat_i^{(-\ell)}$ which is $\mat_i$ without the column $\mat_i^{\ell\rightarrow \cdot}$. Then we can fix the partial type, and ask how both $\pi^{(\ell)}_i$ and $c_i^{(\ell)}$ change as a function of the remaining latent variables, and in particular $p_i^{(\ell)}$. 
Our main result is the following.
\begin{theorem}[Conditional Anna Karenina theorem for categorical data]
Fix prior probabilities $q$, a category $\ell$, and a worker $i$ with partial type $\mat^{(-\ell)}_i$. Then there are constants $\alpha^{(\ell)}, \beta^{(\ell)}$ such that
$E[\pi_i^{(\ell)}]=\alpha^{(\ell)} c_i^{(\ell)} + \beta^{(\ell)}$.
\end{theorem}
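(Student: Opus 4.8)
The plan is to prove the identity directly inside the Dawid--Skene model, since the theorem asserts an \emph{exact} linear relation rather than an approximation, so there is little to gain from the abstract machinery of \citet{meir2023easy}. As the labels are categorical I would use the Hamming similarity $s(x_i,x_{i'})=\mathbf 1[x_i=x_{i'}]$, under which conditioning on $x_i\in\calX^{(\ell)}$ is just conditioning on the event $x_i=\ell$. Then $c_i^{(\ell)}=Pr[z^*=\ell\mid x_i=\ell]$, and for any other worker $i'$ we have $E[s(x_i,x_{i'})\mid x_i=\ell]=Pr[x_{i'}=\ell\mid x_i=\ell]$. The first move is to introduce the shared posterior $r_\tau:=Pr[z^*=\tau\mid x_i=\ell]=\mat_i^{\tau\rightarrow\ell}q^{(\tau)}/Z$ with $Z:=\sum_{\tau'}\mat_i^{\tau'\rightarrow\ell}q^{(\tau')}$ (Bayes' rule), and then to unfold the similarity using conditional independence of the workers given $z^*$: $Pr[x_{i'}=\ell\mid x_i=\ell]=\sum_\tau\mat_{i'}^{\tau\rightarrow\ell}r_\tau$. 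Averaging this over the fixed population of co-labelers gives $E[\pi_i^{(\ell)}]=\sum_\tau\bar\mat^{\tau\rightarrow\ell}r_\tau$, where $\bar\mat^{\tau\rightarrow\ell}:=\frac1{n-1}\sum_{i'\neq i}\mat_{i'}^{\tau\rightarrow\ell}$, while $c_i^{(\ell)}=r_\ell$. Thus both quantities are linear functionals of the one posterior vector $r$.

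Next I would use what the partial type $\mat_i^{(-\ell)}$ actually fixes. The posterior $r$ depends only on the $\ell$-th column $(\mat_i^{\tau\rightarrow\ell})_\tau$; every entry with $\tau\neq\ell$ lies in a row held fixed by $\mat_i^{(-\ell)}$, so only the diagonal entry $\mat_i^{\ell\rightarrow\ell}=p_i^{(\ell)}$ is free. (The off-diagonal entries of the free row $\ell$ must also vary with $p_i^{(\ell)}$ to keep the row summing to one, but they never appear in the $\ell$-th column and hence never touch $r$.) Setting the constants $B:=\sum_{\tau\neq\ell}\mat_i^{\tau\rightarrow\ell}q^{(\tau)}$ and $D:=\sum_{\tau\neq\ell}\bar\mat^{\tau\rightarrow\ell}\mat_i^{\tau\rightarrow\ell}q^{(\tau)}$, the normalizer becomes $Z=p_i^{(\ell)}q^{(\ell)}+B$, and so $c_i^{(\ell)}=p_i^{(\ell)}q^{(\ell)}/Z$, which rearranges to the two facts I will need: $p_i^{(\ell)}q^{(\ell)}/Z=c_i^{(\ell)}$ and $B/Z=1-c_i^{(\ell)}$.

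The identity then assembles by splitting off the $\tau=\ell$ term. We get $E[\pi_i^{(\ell)}]=\bar\mat^{\ell\rightarrow\ell}\,\frac{p_i^{(\ell)}q^{(\ell)}}{Z}+\frac{D}{Z}=\bar\mat^{\ell\rightarrow\ell}c_i^{(\ell)}+\frac{D}{B}\cdot\frac{B}{Z}=\bigl(\bar\mat^{\ell\rightarrow\ell}-\tfrac{D}{B}\bigr)c_i^{(\ell)}+\tfrac{D}{B}$, so I may take $\alpha^{(\ell)}=\bar\mat^{\ell\rightarrow\ell}-D/B$ and $\beta^{(\ell)}=D/B$, both independent of $p_i^{(\ell)}$. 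Since $D/B$ is a convex combination of the off-diagonal population averages $\{\bar\mat^{\tau\rightarrow\ell}\}_{\tau\neq\ell}$, the slope $\alpha^{(\ell)}$ is positive exactly when the population is on average likelier to report $\ell$ when $\ell$ is true than when it is false, a condition implied by but strictly weaker than better-than-random accuracy, matching the claim in the text.

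I expect the genuine obstacle to be conceptual rather than computational: conditioning on the \emph{reported} label is precisely what breaks the between-worker conditional independence underlying the original AK argument, so I cannot reuse it and must instead route everything through the induced posterior $r$. The only technical care-point is justifying that $\bar\mat^{\tau\rightarrow\ell}$ and the co-labeling weights do not themselves depend on $p_i^{(\ell)}$ --- i.e. that the expectation is taken in the idealized model where the other workers' types and the sampling of shared items do not covary with worker $i$'s free parameter. Once that independence is granted, the shared factor $1/Z$ cancels and the linearity falls out immediately.
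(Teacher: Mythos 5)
Your proposal is correct and follows essentially the same route as the paper's proof: both condition on $z^*$ to split $Pr[x_{i'}=\ell\mid x_i=\ell]$ into the $\tau=\ell$ term (giving the $p_{i'}^{(\ell)}c_i^{(\ell)}$ slope contribution) and the $\tau\neq\ell$ remainder, and both reduce the remainder to showing that $1/Pr[x_i=\ell]$ is affine in $c_i^{(\ell)}$ with only $p_i^{(\ell)}$ free once the partial type is fixed. Your identity $B/Z=1-c_i^{(\ell)}$ is just a tidier packaging of the paper's Lemma on $\tfrac{1}{Z+Y}$, and averaging over the population before rather than after the algebra yields the identical coefficients $\alpha^{(\ell)}=\bar{\mathcal{M}}^{\ell\rightarrow\ell}-D/B$ and $\beta^{(\ell)}=D/B$.
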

\full{Note that the expectation is taken over both the types of the other workers, and their reported labels.}

We present the key components of the proof here and defer the complete proof to the Appendix. 
First, we note that since $\pi_i^{(\ell)}$ is an average over comparisons to random labels reported by a random worker $i'$, we have (omitting the item subscript $j$):
\begin{small}
$$E[\pi_i^{(\ell)}]=E_{i'}[E[s(x_{i'},x_i)|x_i=\ell]]=E_{i'}[Pr[x_{i'}=\ell|x_i=\ell]].$$
\end{small}
Hence by linearity of expectation, it is sufficient to show the following proposition. 

\begin{proposition}$Pr[x_{i'}=\ell|x_i=\ell]$ is linear in $c_i^{(\ell)}$, for any worker type $\mat_{i'}$.
\end{proposition}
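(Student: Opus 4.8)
The plan is to compute $Pr[x_{i'}=\ell\mid x_i=\ell]$ directly from the Dawid--Skene model and exhibit it as an affine function of $c_i^{(\ell)}$. Because $x_{i'}$ and $x_i$ are conditionally independent given the ground truth $z^*$, I would first condition on $z^*$ and write $Pr[x_{i'}=\ell\mid x_i=\ell]=\sum_{\tau}\mat_{i'}^{\tau\rightarrow\ell}\,Pr[z^*=\tau\mid x_i=\ell]$. Here the weights $\mat_{i'}^{\tau\rightarrow\ell}$ are constants determined by the (arbitrary, fixed) type of $i'$, so the entire dependence on $i$'s accuracy is carried by the posteriors $Pr[z^*=\tau\mid x_i=\ell]$. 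Under Hamming similarity the definition of conditional accuracy gives $c_i^{(\ell)}=Pr[z^*=\ell\mid x_i=\ell]$, so the $\tau=\ell$ summand is exactly $\mat_{i'}^{\ell\rightarrow\ell}c_i^{(\ell)}$ and is already linear in $c_i^{(\ell)}$; the remaining task is to handle the terms with $\tau\neq\ell$.

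For those I would apply Bayes' rule, $Pr[z^*=\tau\mid x_i=\ell]=\mat_i^{\tau\rightarrow\ell}q^{(\tau)}/Pr[x_i=\ell]$, and split the denominator as $Pr[x_i=\ell]=p_i^{(\ell)}q^{(\ell)}+A$ with $A:=\sum_{\sigma\neq\ell}\mat_i^{\sigma\rightarrow\ell}q^{(\sigma)}$. The decisive structural fact is that fixing the partial type $\mat_i^{(-\ell)}$ holds every off-diagonal entry $\mat_i^{\tau\rightarrow\ell}$ with $\tau\neq\ell$ constant (these lie outside the row being varied), so each numerator $\mat_i^{\tau\rightarrow\ell}q^{(\tau)}$ and their sum $A$ do not depend on the free parameter $p_i^{(\ell)}$. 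Since $1-c_i^{(\ell)}=A/Pr[x_i=\ell]$, this yields $Pr[z^*=\tau\mid x_i=\ell]=\big(\mat_i^{\tau\rightarrow\ell}q^{(\tau)}/A\big)(1-c_i^{(\ell)})$ for every $\tau\neq\ell$: all the ``wrong-truth'' posteriors share the common factor $1-c_i^{(\ell)}$ with fixed coefficients.

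Substituting back gives $Pr[x_{i'}=\ell\mid x_i=\ell]=\mat_{i'}^{\ell\rightarrow\ell}c_i^{(\ell)}+\big(\sum_{\tau\neq\ell}\mat_{i'}^{\tau\rightarrow\ell}\mat_i^{\tau\rightarrow\ell}q^{(\tau)}/A\big)(1-c_i^{(\ell)})$, which is affine in $c_i^{(\ell)}$ and proves the proposition; feeding this into the identity $E[\pi_i^{(\ell)}]=E_{i'}[Pr[x_{i'}=\ell\mid x_i=\ell]]$ from the preceding discussion and averaging over $i'$ then yields the theorem with explicit $\alpha^{(\ell)}$ and $\beta^{(\ell)}$. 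I expect the main obstacle to be conceptual rather than computational: one must pin down precisely which quantity is being varied and argue that conditioning on the partial type indeed freezes $A$ and all $\tau\neq\ell$ numerators, leaving $c_i^{(\ell)}$ a monotone reparametrization of the single free parameter $p_i^{(\ell)}$ (via $c_i^{(\ell)}=p_i^{(\ell)}q^{(\ell)}/(p_i^{(\ell)}q^{(\ell)}+A)$). This is exactly the place where the independence assumption of \citet{meir2023easy} is replaced, since here we are conditioning on the reported label.
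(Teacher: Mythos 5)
Your proposal is correct and follows essentially the same route as the paper: condition on $z^*$, isolate the $\tau=\ell$ term as $p_{i'}^{(\ell)}c_i^{(\ell)}$, and show the remaining factor $1/Pr[x_i=\ell]$ is affine in $c_i^{(\ell)}$ because fixing the partial type $\mat_i^{(-\ell)}$ freezes $A=\sum_{\tau\neq\ell}q^{(\tau)}\mat_i^{\tau\rightarrow\ell}$. Your identity $1-c_i^{(\ell)}=A/Pr[x_i=\ell]$ is exactly the paper's Lemma on $\frac{1}{Z+Y}$ in a slightly more transparent form, and your resulting coefficients match those derived in the appendix.
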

\begin{proof} We split to cases when $x_i$ agrees or disagrees with the truth $z^*$, and show that both terms are linear in $c_i^{(\ell)}$:
\begin{small}
\begin{align}
  Pr&[x_{i'}=\ell|x_i=\ell] = 
        Pr[x_{i'}=\ell|z^*=\ell] c_i^{(\ell)} \\
        &~~~+ \sum_{\tau\neq \ell}Pr[x_{i'}=\ell|z^*=\tau]Pr[z^*=\tau|x_i=\ell]\notag\\
                   &= p_{i'}^{(\ell)} c_i^{(\ell)} + \sum_{\tau\neq \ell}M_{i'}^{\tau\rightarrow \ell}\frac{q^{(\tau)} M_i^{\tau\rightarrow \ell}}{Pr[x_i=\ell]} \\
                   &= p^{(\ell)}_{i'} c_i^{(\ell)} + \frac1{Pr[x_i=\ell]}\sum_{\tau\neq \ell}q^{(\tau)} M_i^{\tau\rightarrow \ell} M_{i'}^{\tau\rightarrow \ell} \label{eq:sbb}
\end{align}
\end{small}
The first term in Eq.~\eqref{eq:sbb} is obviously linear in $c_i^{(\ell)}$ since $\mat_{i'}$ is fixed. For the second term, we first observe that each $q^{(\tau)} M_i^{\tau\rightarrow \ell} M_{i'}^{\tau\rightarrow \ell}$ for $\tau\neq \ell$ is completely determined by the fixed terms $q,\mat_{i'}$ and $\mat^{(-\ell)}_i$. It remains to show that $\frac1{Pr[x_i=\ell]}$ is linear in $c_i^{(\ell)}$. Note that
\begin{small}
\begin{align}
      c_i^{(\ell)}  &= \frac{q^{(\ell)} p_i^{(\ell)}}{Pr[x_i=\ell]}\label{eq:cil}\\
    Pr[x_i=\ell]  &= q^{(\ell)} p_i^{(\ell)}+ \sum_{\tau\neq \ell}q^{(\tau)} \mat_i^{\tau\rightarrow \ell} \label{eq:pxil}\\
\forall Y,Z>0,&~~\frac{1}{Z+Y} = \frac{-1}{Y} \cdot \frac{Z}{Z+Y} + \frac{1}{Y} \label{eq:YZ}
\end{align}
\end{small}
Finally, denote $Z:= q^{(\ell)}p_i^{(\ell)}$; $Y= \sum_{\tau\neq \ell}q^{(\tau)}\mat_i^{\tau\rightarrow\ell}$, and note that $Y$ is a constant. We have
$$\frac1{Pr[x_i=\ell]} = \frac{1}{Z+Y} =  \frac{-1}{Y} \cdot \frac{Z}{Z+Y} + \frac{1}{Y} =  \frac{-1}{Y} \cdot c_i^{(\ell)} + \frac{1}{Y},$$
as required.\end{proof}

\begin{remark}
The higher $\alpha^{(\ell)}$ is w.r.t $\beta^{(\ell)}$, the better we can separate good workers from poor ones. Note that a sufficient condition for $\alpha^{(\ell)}$ to be positive, is that $E_{\mat_{i'}}[p_{i'}^{(\ell)}]$ is higher than $E_{\mat_{i'}}[\mat_{i'}^{\tau\rightarrow \ell}]$ for all $\tau\neq \ell$. I.e., that the overall competence to identify $\ell$ is higher than the overall chance to incorrectly report a label as $\ell$.  \rmr{add proof?}
\end{remark}

\newsubsec{Item Response Theory}
Item Response Theory is a statistical model developed for standardized tests, which posits that the probability of answering a question correctly as a function of three factors of the question: 

\def\Pirt{\textsc{P}^{IRT}}
 \begin{equation*}
     \label{eq:IRT_main}\Pirt(x_{ij})=p^0_j+ \frac{1-p^0_j}{1+\exp(b_j(d_j-c_i))},
 \end{equation*}
 where $d_j, b_j, p^0_j$ correspond to item~$j$'s difficulty, separation, and base rate; and $\Pirt$ is the probability that answer $x_{ij}$ is correct~\cite{BakerBook2004item}. 
 We adopt the concept and the formula but use it in two unusual ways: 
\begin{itemize}
\item We assign the three `question factors' mentioned above not to each item, but to each \emph{reported label} type $\ell$. 
\item We do not restrict its usage to categorical data.
\end{itemize}
This means we define a model with $(3k+n)$ latent variables, compared to $(n\times k)$ for \POAK, that captures the expected similarity of $x_{ij}$ to the ground truth, conditional on $x_{ij}\in\calX^{(\ell)}$.

In the Appendix
, we describe the \POAK-IRT (or \POAKi) algorithm, which first runs \POAK and then reduce the number of parameters by finding the IRT parameters with the best fit.

We also prove that the common case of single-parameter workers with arbitrary priors on true answers is captured by our IRT model without any loss of precision. 
\newsec{Empirical Results}
\label{sec:results}
To showcase the effectiveness of our proposed methods, we conduct extensive numerical experiments on four complex annotation datasets and compare against several baselines. 

\def\AGE{\textsc{Age}\xspace}
\def\KEYPOINT{\textsc{Keypoint}\xspace}
\def\TOPIC{\textsc{Taxonomy}\xspace}
\def\TREEPATH{\textsc{TreePath}\xspace}
\def\BOUNDINGBOX{\textsc{BoundingBox}\xspace}

\begin{table*}
\begin{minipage}[t]{.45\textwidth}

\centering
\begin{small}
   \begin{tabular}{lcl}
       \hline
       \textbf{Method} & \textbf{Conf.-based} & \textbf{Aggregation}\\
       \hline
       \POAK-Weight & Yes & Averaging/voting \\
       \POAK-BAU & Yes & Selection \\
       \OAK-Weight & Yes & Averaging/voting \\
       \OAK-BAU & Yes & Selection \\
        \SAD & No & Selection \\
       \Uniform & No & Averaging/voting \\
       \hline
   \end{tabular}
\end{small}
\caption{Summary of key characteristics of competing methods in our experiments.}
\label{tab:mathods}
\end{minipage}%
\hfill
\begin{minipage}[t]{0.49\textwidth}
\centering
\begin{small}
   \begin{tabular}{llllc}
       \hline
       \multirow{2}{*}{\textbf{Dataset}} & \multirow{2}{*}{\textbf{Labelers}} & \textbf{Similarity} & \textbf{Audit} \\
       & & \textbf{measure} & \textbf{labels} \\
       \hline
       \KEYPOINT & $\sim50$ & Gaussian similarity & Yes \\
       \TOPIC & $\sim500$ & Jaccard & Yes \\
       \TREEPATH & $\sim70$ &Hierarchical Hamming &   No \\
       \BOUNDINGBOX & $\sim100$ & Image Jaccard & Yes\\
       \hline
   \end{tabular}
\end{small}
\caption{Basic information of the real crowdsourcing datasets.}
\label{tab:datasets}
\end{minipage}\vspace{-0mm}
\end{table*}


\newsubsec{Setup of Numerical Experiments}
\newpar{Baselines and datasets}
To evaluate the three variants of our proposed algorithms (\OAK, \POAK, and \POAKi), we consider several baseline methods and label aggregation methods (summarized in Tab. \ref{tab:mathods}) including :
\begin{itemize}
    \item \SAD: Smallest Average Distance \citep{braylan2020modeling}. For a job with multiple labels, select the label which has the smallest average distance to other labels. 
    \item \BAU: Best Available User \citep{braylan2020modeling}. Given the (estimated) confidences of multiple labels for a job, select the label with the highest confidence. This aggregation method can be combined with any confidence estimation methods like ours. 
    \item \Uniform: Uniform averaging or majority voting. This is in contrast to the ``weight'' method where weights are determined by the estimated labeler confidence.
\end{itemize}

All methods are applied on four real crowdsourcing datasets obtained from Meta, covering a broad range of different labeling tasks. See basic information of the datasets in Tab.~\ref{tab:datasets}. For example, in the \TOPIC dataset, each annotation is a subset of 26 predefined topics. The
similarity of two annotations is their Jaccard similarity and aggregation is performed by majority voting on each topic. 

Note that to use the \POAK algorithm we need to somehow partition annotations into types. In all datasets we used a simple straightforward partition. E.g. in the \TOPIC dataset we assign each singleton to a type and group all non-singletons into a separate type. 

Comprehensive descriptions and partitioning details are included in the Appendix.

\newpar{Evaluation}
We first randomly split each dataset into a training set and a test set. We then run the learning component (Alg.~\ref{alg:OAK}) of each variant to estimate labeler's confidence. Given the learned confidence, we implement the estimation component (Alg.~\ref{alg:OAK_predict}) to estimate the accuracy of the first label for each job in the test set, only inquiring the next label if the accuracy is below a given threshold. At each threshold, we calculate the cost and accuracy by averaging over all jobs in the test set, which corresponds to a point on the curve in Fig. \ref{fig:dmr_example}. By varying the threshold, we are able to generate a cost-accuracy curve for each method as in Fig. \ref{fig:dmr_example}.
 
 This cost-accuracy curve of each algorithm is compared against the \Uniform baseline which uses a biased coin-flip to decide whether to use one or all annotations. We evaluate the performance of all methods by computing the \emph{Relative Area Under the Curve} (RAUC), represented as the shaded area in the illustrating example in Fig.~\ref{fig:dmr_example}.

\newsubsec{Results}
\newpar{Comparison with the baselines} Fig.~\ref{fig:results} shows the RAUC results of the different methods on the four datasets.

The confidence-based methods \POAK-Weight and \OAK-Weight consistently outperform non-model based alternatives, \SAD and \Uniform, which justifies the usefulness of confidence estimation in improving the cost-quality trade-off curve. Averaging or majority voting proves to be more effective than selection across most of the settings. This finding is in line with a similar conclusion from \citet{Braylan2021aggregating}. 

In addition, \POAK outperforms \OAK with \POAK-Weight dominating all other methods. This implies that the competencies of labelers vary across different annotation types. The partition-based method can effectively learn and adjust for this heterogeneity. Finally, as the training sample size increases, the performance of all methods improves as expected as the methods can make more accurate estimations of labeler confidence.

\newpar{Deep dive on \POAK}
As \POAK-Weight dominates all other alternatives, we dive deeper into the different variants of \POAK\footnote{For simplicity, the term ``Weight'' is omitted from names hereafter, given that all \POAK variants use this aggregation method.}, including \POAKi and \POAK-IRT, to examine the effectiveness of calibration in confidence estimation. 
It turns out that data is not uniformly distributed across different annotation types. We tackle this by computing the \POAKi estimation \emph{in addition}, using it as a baseline for smoothing instead of \OAK. \xc{I am not exactly following this part. Did a few changes but not fully sure whether it makes sense.} As shown in Fig.~\ref{fig:regularization}, both \POAKi and \POAK-IRT outperform \POAK, highlighting the value of calibration particularly when training sample is small. In addition, IRT-based regularization (purple curve) further improves over \POAKi by allowing for information sharing across partitions to prevent overfitting. 

In addition to the performance evaluation, we also examine the predictive accuracy of \POAK. Specifically, the predicted accuracy of labels and their true accuracy are highly correlated (Fig.~\ref{fig:cal_scatter}). Furthermore, the estimation accuracy exhibits heterogeneity across different annotation types.


 

\begin{figure}[t]
    \centering
    \includegraphics[width=0.96\linewidth]{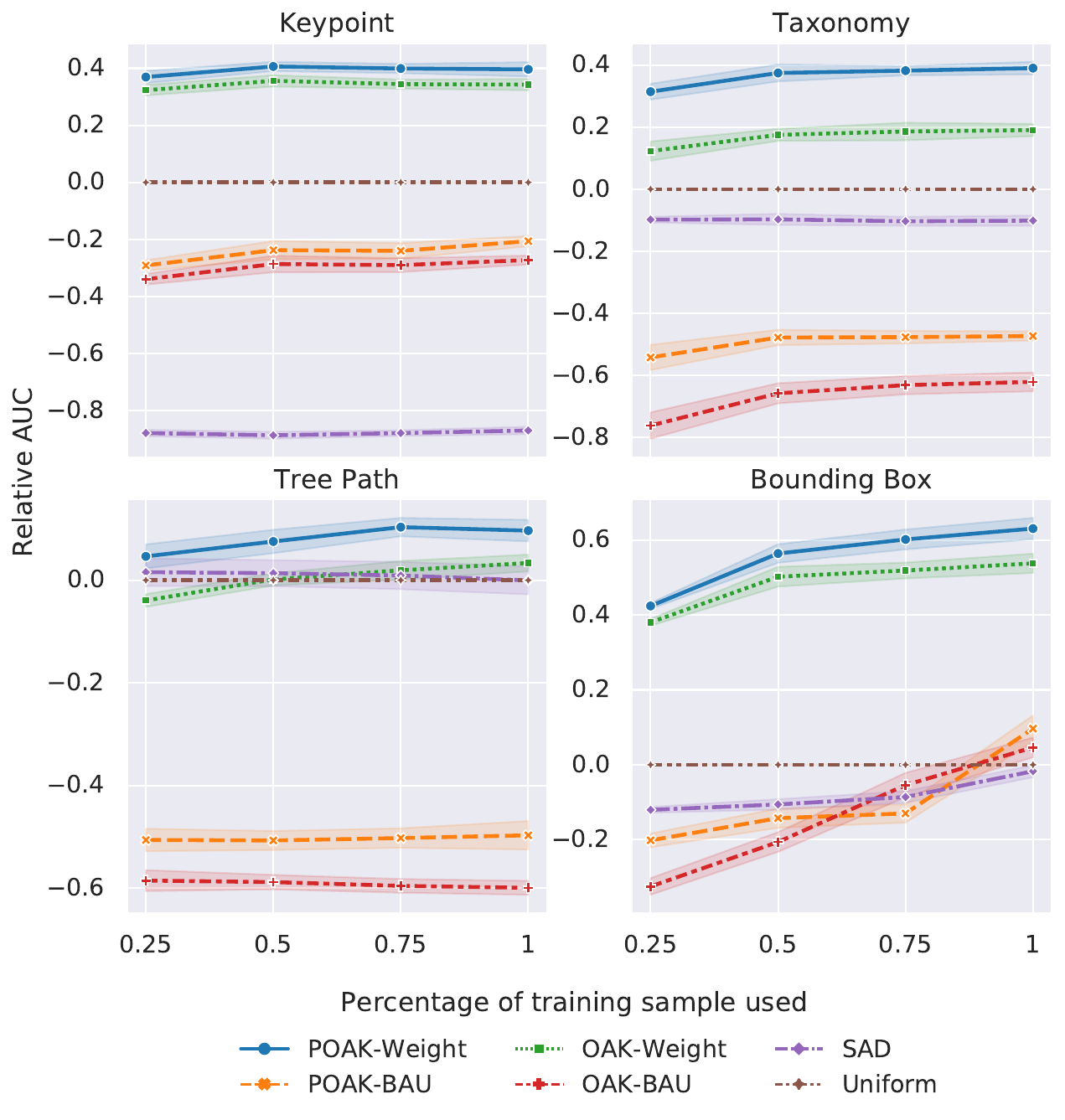}
    \caption{RAUC results (relative to \Uniform) of all methods on four datasets. Point estimates and 95\% confidence intervals are obtained over 10 trails under each setting.}
    \vspace{-0mm}
    \label{fig:results}
\end{figure}

\begin{figure}[t]
    \centering
    \includegraphics[width=0.92\linewidth]{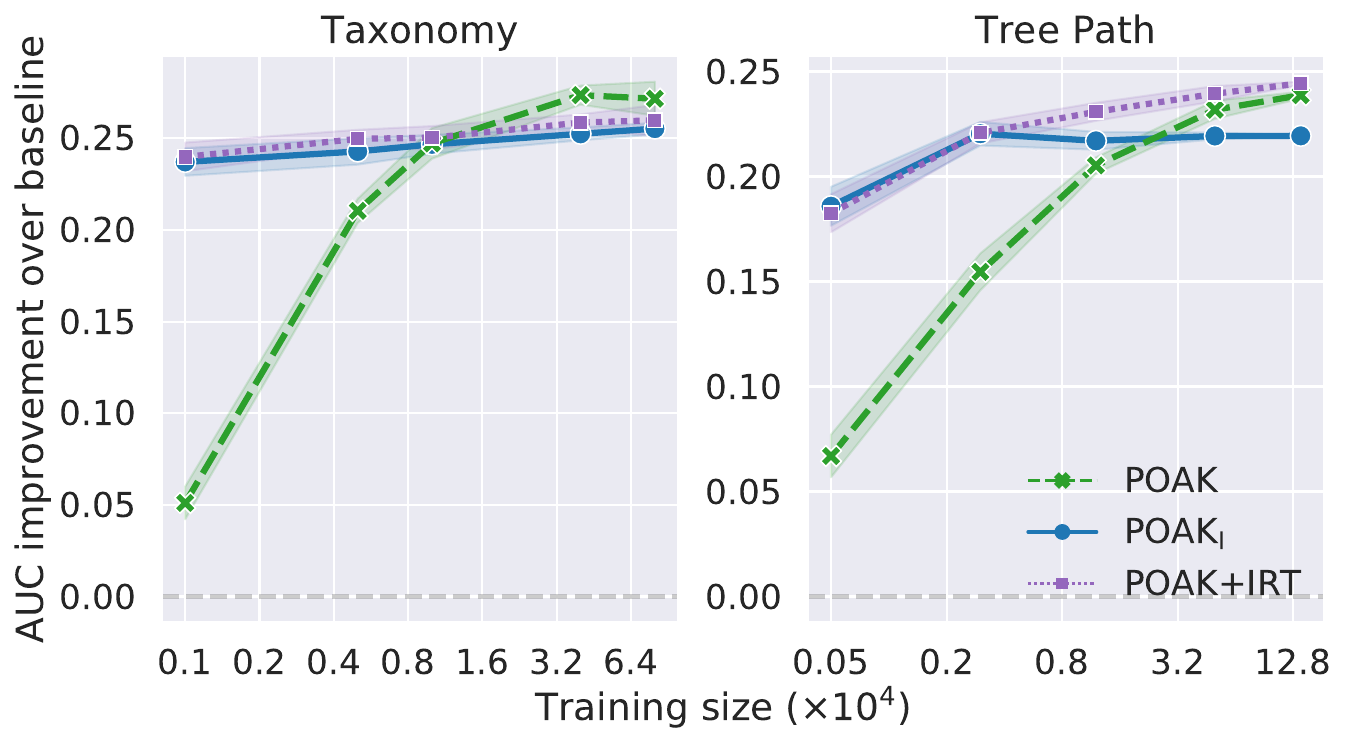}
    \caption{Comparison between different \POAK variants.}
    \label{fig:regularization}
\end{figure}


\begin{figure}[t]
\includegraphics[width=\linewidth]{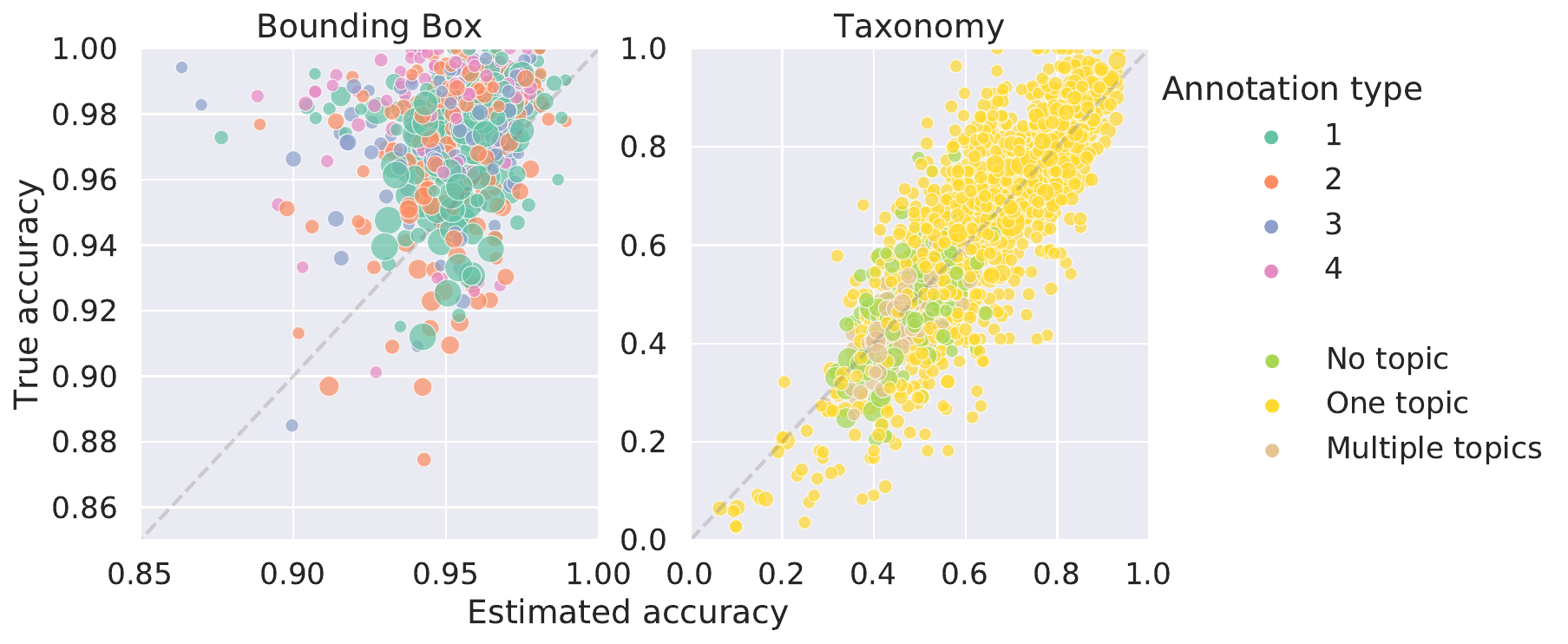}
\caption{ A plot of estimated accuracy $\hat c^{(\ell)}_i$ vs. actual accuracy computed over all items in the test set. Each point represents a pair $(i,\ell)$ of worker and label type, where larger dots represent pairs with more samples in the data. The enlarged version of the left panel is in the Appendix.}
\label{fig:cal_scatter}
\end{figure}

\full{
\newsubsec{Calibration}\rmr{see comment on calibration above. maybe remove and only mention correlation?}
We explained in Section~\ref{sec:learning} how even a small amount of auditor labels can be used for calibration of predicted accuracy. However this was not used in the results above.

We now add the step of optimizing the linear transformation $L$ that appears in Alg.~\ref{alg:OAK}, in the domains for which we have ground truth labels. 

Fig.~\ref{fig:cal_scatter} shows that calibration, when applicable, properly aligns the estimated and actual accuracies, even when conditional on reported label type, thereby show empirically that the AK principle holds (and not restricted to categorical data). Note that calibration was performed using training data only and the figure compares to accuracy on test data.  

There is a correlation of $0.85$ between estimated and actual accuracy in the \TOPIC dataset ($0.55,0.83,0.68$ for the three types, respectively) 
 In the\BOUNDINGBOX dataset there is a correlation of $0.79$ ($0.85,0.56,0.36$ and $0.95$ for the four types, respectively).

\rmr{We should say what is the fraction of items that contain ground truth. currently it is about 5\% in Topics and 50\% in bounding box. However I think we should get the same results even with much less as we only need to estimate two latent variables per type. Thus 300-500 ground truth labels should be enough regardless of sample size. Perhaps we can rerun after fixing the number of items with ground truth?}
}

\newsec{Conclusion}
\label{sec:conclusion}
In this paper, we develop novel modeling approaches to improve the efficiency of online crowdsourcing processes with complex annotations. The models proposed are task-independent and applicable to any complex annotation tasks in which the pairwise similarity between two arbitrary annotations can be defined. These models are based on the underlying Anna Karenina principle that good workers are similar to one another in their reported annotations.

We first extend previous work on \PTD  to propose \OAK in the online setting and then introduce two extensions: (1) \POAK which estimates the accuracy of complex annotations by first partitioning the observed annotations into types and then applying \OAK to estimate workers' per-type competence, and (2) \POAKi that reduces the number of parameters  by using Item Response Theory.

We provide theoretical proofs that the Anna Karenina principle extends to per-reported-type estimations, which generalizes the results of~\citet{meir2023easy}. We also provide extensive empirical results comparing the effectiveness of our methods on four real-world applications.

\section*{Acknowledgements}
We appreciate the helpful comments of the anonymous reviewers. Reshef Meir was partly funded by the Israel Science Foundation (ISF grant 2539/20).

\bibliography{proxy,clara}

\newpage
\appendix
\onecolumn

\section{Semi-supervised learning}
\label{apx:supervised}

While it is true that $\hat c^0_i$ by itself is unreliable for most agents (due to few samples) we would still not want to throw away this information. As we now have two separate (albeit not completely independent) estimations of $c_i$ (namely $\hat c^0_i$ and $L(\pi_i)$), the most natural thing is to combine them together. We use a weighted average of the two estimations, with a meta-parameter $\alpha^{}\geq 1$ specifying how many pairwise samples are equal to each auditor label:
\begin{equation}
    \label{aeq:c_hat}
    \hat c_i := \frac{\alpha^{} m^*_i}{\alpha^{} m^*_i+ m_i}\hat c^0_i +  \frac{m_i}{\alpha^{} m^*_i+ m_i} L(\pi_i).
\end{equation}

Since both $L(\pi_i)$ and $\hat c^0_i$ are unbiased estimators of $c_i$, so is $\hat c_i$.

Note that while this substantially improves performance (when we have auditor labels) and is recommended in practice, in this paper we did not include this step in our simulations, since our focus is on the benefit we can get from average similarity.  

\rmr{Ideally we would add here figures showing the combined effect of OAK / POAK with auditor labels}

\section{Proof of the Conditional Anna Karenina theorem}
\label{apx:POAK}

We make the two following observations:
\begin{small}
\begin{align}
      c_i^{(\ell)} & = Pr[z^*=\ell| x_i=\ell] = \frac{Pr[z^*=\ell, x_i=\ell]}{Pr[x_i=\ell]}\notag\\
      &= \frac{q^{(\ell)} p_i^{(\ell)}}{Pr[x_i=\ell]}\label{aeq:cil}\\
    Pr[x_i=\ell] & = Pr[x_i=\ell,z^*=\ell] + \sum_{\tau\neq \ell}Pr[x_i=\ell,z^*=\tau] \notag\\
    &= q^{(\ell)} p_i^{(\ell)}+ \sum_{\tau\neq \ell}q^{(\tau)} \mat_i^{\tau\rightarrow \ell} \label{aeq:pxil}
\end{align}
\end{small}

\paragraph{A conditional Anna Karenina principle for categorical labels}

\begin{theorem}\label{thm:AK_cond} Fix prior probabilities $q$, a category $\ell$, and a worker $i$ with partial type $\mat^{(-\ell)}_i$. Then there are constants $\alpha^{(\ell)}, \beta^{(\ell)}$ such that
$E[\pi_i^{(\ell)}]=\alpha^{(\ell)} c_i^{(\ell)} + \beta^{(\ell)}$.
\end{theorem}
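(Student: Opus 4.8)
The plan is to reduce the statement to the Proposition already established in the main text and then close by linearity of expectation. First I would unfold the definition of the conditional average similarity. Because the data are categorical and $s$ is the Hamming similarity, $s(x_{i'j},x_{ij})=\mathbf{1}[x_{i'j}=x_{ij}]$, so conditioning on the event $x_{ij}=\ell$, the expected similarity of $i$'s label to that of a uniformly random co-labeler $i'$ is exactly $Pr[x_{i'}=\ell\mid x_i=\ell]$. Since the Dawid--Skene model is i.i.d.\ across items, averaging over the items in $M_{ii'}$ and over the random partner $i'$, and then taking the model expectation, collapses to
\begin{equation*}
E[\pi_i^{(\ell)}]=E_{i'}\bigl[Pr[x_{i'}=\ell\mid x_i=\ell]\bigr],
\end{equation*}
where the outer expectation is over the (random) type $\mat_{i'}$ of the partner. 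This is the reduction asserted in the main text; in the appendix I would spell out that the item index drops because labels of distinct items are identically distributed given the types, and that conditioning on $x_{ij}=\ell$ is well defined whenever $Pr[x_i=\ell]>0$.

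Next I would invoke the Proposition, which gives, for \emph{each} fixed partner type $\mat_{i'}$,
\begin{equation*}
Pr[x_{i'}=\ell\mid x_i=\ell]=p_{i'}^{(\ell)}c_i^{(\ell)}+\frac{1}{Pr[x_i=\ell]}\sum_{\tau\neq\ell}q^{(\tau)}\mat_i^{\tau\rightarrow\ell}\mat_{i'}^{\tau\rightarrow\ell}.
\end{equation*}
The only factor here not manifestly constant in $c_i^{(\ell)}$ is the reciprocal $1/Pr[x_i=\ell]$, and this is the step I expect to be the genuine obstacle. Writing $Pr[x_i=\ell]=q^{(\ell)}p_i^{(\ell)}+Y$ with $Y:=\sum_{\tau\neq\ell}q^{(\tau)}\mat_i^{\tau\rightarrow\ell}$, the reciprocal is \emph{nonlinear} in the free parameter $p_i^{(\ell)}$; the trick is that it becomes \emph{affine} once reparametrized by $c_i^{(\ell)}$. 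Indeed, setting $Z:=q^{(\ell)}p_i^{(\ell)}$ we have $Z/(Z+Y)=c_i^{(\ell)}$ by Eq.~\eqref{aeq:cil}, and the identity $\tfrac{1}{Z+Y}=-\tfrac1Y\cdot\tfrac{Z}{Z+Y}+\tfrac1Y$ yields $1/Pr[x_i=\ell]=-\tfrac1Y\,c_i^{(\ell)}+\tfrac1Y$. Here $Y$ is a genuine constant, since it is determined only by the fixed prior $q$ and the fixed partial type $\mat_i^{(-\ell)}$ (it never involves the column $\mat_i^{\ell\rightarrow\cdot}$). Substituting, $Pr[x_{i'}=\ell\mid x_i=\ell]=a_{i'}c_i^{(\ell)}+b_{i'}$ with $a_{i'}=p_{i'}^{(\ell)}-\tfrac1Y\sum_{\tau\neq\ell}q^{(\tau)}\mat_i^{\tau\rightarrow\ell}\mat_{i'}^{\tau\rightarrow\ell}$ and $b_{i'}=\tfrac1Y\sum_{\tau\neq\ell}q^{(\tau)}\mat_i^{\tau\rightarrow\ell}\mat_{i'}^{\tau\rightarrow\ell}$, both depending only on $q,\mat_{i'},\mat_i^{(-\ell)}$ and never on $c_i^{(\ell)}$.

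Finally I would take the expectation over the partner type and apply linearity of expectation:
\begin{equation*}
E[\pi_i^{(\ell)}]=E_{i'}\bigl[a_{i'}c_i^{(\ell)}+b_{i'}\bigr]=\underbrace{E_{i'}[a_{i'}]}_{=:\,\alpha^{(\ell)}}\,c_i^{(\ell)}+\underbrace{E_{i'}[b_{i'}]}_{=:\,\beta^{(\ell)}},
\end{equation*}
which is precisely the claimed affine relation, and $\alpha^{(\ell)},\beta^{(\ell)}$ are constants because $a_{i'},b_{i'}$ carry no dependence on $c_i^{(\ell)}$. The entire argument is thus routine bookkeeping plus linearity of expectation, with the single nonobvious ingredient being the linearization of the Bayesian normalizer $1/Pr[x_i=\ell]$. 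As a byproduct this reading gives $\alpha^{(\ell)}=E_{i'}[p_{i'}^{(\ell)}]-\tfrac1Y E_{i'}\bigl[\sum_{\tau\neq\ell}q^{(\tau)}\mat_i^{\tau\rightarrow\ell}\mat_{i'}^{\tau\rightarrow\ell}\bigr]$; since $\tfrac1Y E_{i'}[\cdot]$ is a convex combination of the $E_{i'}[\mat_{i'}^{\tau\rightarrow\ell}]$ over $\tau\neq\ell$, this makes the positivity condition of the Remark transparent.
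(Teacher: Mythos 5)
Your proposal is correct and follows essentially the same route as the paper's proof: the same reduction of $E[\pi_i^{(\ell)}]$ to $E_{i'}[Pr[x_{i'}=\ell\mid x_i=\ell]]$, the same case split on $z^*$, the same linearization of $1/Pr[x_i=\ell]$ via the identity $\frac{1}{Z+Y}=-\frac{1}{Y}\cdot\frac{Z}{Z+Y}+\frac{1}{Y}$ with $Z=q^{(\ell)}p_i^{(\ell)}$ and constant $Y$, and the same closing appeal to linearity of expectation over the partner type, yielding the identical coefficients $\alpha^{(\ell)},\beta^{(\ell)}$. Your only additions—making the Hamming-similarity assumption explicit in the reduction step and tying the coefficient formula to the positivity condition in the Remark—are consistent with what the paper leaves implicit.
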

\begin{lemma}\label{lemma:YZ}
For any $Z,Y>0$, 
$$\frac{1}{Z+Y} = \frac{-1}{Y} \cdot \frac{Z}{Z+Y} + \frac{1}{Y}.$$
\end{lemma}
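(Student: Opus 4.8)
The statement is a purely algebraic identity in two positive reals, with no probabilistic content, so the plan is simply to verify it by elementary manipulation. The cleanest route I would take is to clear denominators. Since $Y,Z>0$ guarantees $Y\neq 0$ and $Z+Y\neq 0$, I can multiply both sides by $Y(Z+Y)$; the claimed identity then reduces to $Y = -Z + (Z+Y)$, which holds trivially, and undoing the multiplication recovers the lemma.

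Alternatively, and perhaps more transparently for the reader, I would start from the right-hand side and combine the two terms over the common denominator $Y(Z+Y)$:
\begin{small}
$$\frac{-1}{Y}\cdot\frac{Z}{Z+Y} + \frac{1}{Y} = \frac{-Z}{Y(Z+Y)} + \frac{Z+Y}{Y(Z+Y)} = \frac{Y}{Y(Z+Y)} = \frac{1}{Z+Y},$$
\end{small}
where the final step cancels the factor $Y$ (again using $Y>0$). A third option would be to substitute the decomposition $\frac{Z}{Z+Y} = 1 - \frac{Y}{Z+Y}$ into the right-hand side and watch the two $\frac{1}{Y}$ terms cancel, leaving $\frac{1}{Z+Y}$ directly.

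There is no genuine obstacle here: the only thing to be careful about is the hypothesis $Y,Z>0$, which is precisely what ensures that no denominator vanishes and that the cancellation of $Y$ is legitimate. The real interest of the lemma lies in its application within the proof of Theorem~\ref{thm:AK_cond}, where one sets $Z := q^{(\ell)}p_i^{(\ell)}$ and $Y := \sum_{\tau\neq\ell}q^{(\tau)}\mat_i^{\tau\rightarrow\ell}$, so that $Z+Y = Pr[x_i=\ell]$ by Eq.~\eqref{aeq:pxil} and $\frac{Z}{Z+Y} = c_i^{(\ell)}$ by Eq.~\eqref{aeq:cil}. The lemma then immediately exhibits $\frac{1}{Pr[x_i=\ell]}$ as an affine function of $c_i^{(\ell)}$, which is exactly the step that linearizes $E[\pi_i^{(\ell)}]$ and thus drives the whole argument.
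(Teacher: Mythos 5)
Your proposal is correct, and your second route (combining the right-hand side over the common denominator $Y(Z+Y)$ and cancelling $Y$) is exactly the computation in the paper's proof of Lemma~\ref{lemma:YZ}; the other two routes you mention are trivial rearrangements of the same algebra. Nothing is missing.
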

\begin{proof}
 \begin{align*}
 &\frac{-1}{Y} \cdot \frac{Z}{Z+Y} + \frac{1}{Y} =  \frac{-1}{Y} \frac{Z}{Z+Y} + \frac{Z+Y}{Y(Z+Y)}\\
 &= \frac{Z+Y-Z}{Y(Z+Y)} = \frac{Y}{Y(Z+Y)}=\frac{1}{Z+Y}.\qedhere
 \end{align*}
\end{proof}
\xc{Do we need to write this as a lemma?}
\begin{proof}[Proof of Theorem~\ref{thm:AK_cond}] We will write $Pr[x_{i'}=\ell | x_i=\ell]$ explicitly as a function of the type $M_i$, then split it into two terms and show that each of them is linear in $c_i^{(\ell)}$.
\begin{small}
\begin{align}
  Pr&[x_{i'}=\ell|x_i=\ell] = Pr[x_{i'}=\ell|x_i=\ell,z^*=\ell]Pr[z^*=\ell|x_i=\ell] \notag\\
  &~~~+ \sum_{\tau\neq \ell}Pr[x_{i'}=\ell|x_i=\ell,z^*=\tau]Pr[z^*=\tau|x_i=\ell]\notag\\
       &= Pr[x_{i'}=\ell|z^*=\ell] c_i^{(\ell)} + \sum_{\tau\neq \ell}Pr[x_{i'}=\ell|z^*=\tau]Pr[z^*=\tau|x_i=\ell]\notag\\
              &= p_{i'}^{(\ell)} c_i^{(\ell)} + \sum_{\tau\neq \ell}Pr[x_{i'}\!=\!\ell|z^*\!\!=\!\tau]Pr[z^*\!\!=\!\tau|x_i\!=\!\ell]\notag\\
                   &= p_{i'}^{(\ell)} c_i^{(\ell)} + \sum_{\tau\neq \ell}M_{i'}^{\tau\rightarrow \ell}\frac{q^{(\tau)} M_i^{\tau\rightarrow \ell}}{Pr[x_i=\ell]} \\
                   &= p^{(\ell)}_{i'} c_i^{(\ell)} + \frac1{Pr[x_i=\ell]}\sum_{\tau\neq \ell}q^{(\tau)} M_i^{\tau\rightarrow \ell} M_{i'}^{\tau\rightarrow \ell} \label{aeq:sbb}
\end{align}
\end{small}
Eq.~\eqref{aeq:sbb} is a sum of two terms. The first one is obviously linear in $c_i^{(\ell)}$ since $\mat_{i'}$ is fixed. For the second term, we first observe that each $q^{(\tau)} M_i^{\tau\rightarrow \ell} M_{i'}^{\tau\rightarrow \ell}$ for $\tau\neq \ell$ is completely determined by the fixed terms $q,\mat_{i'}$ and $\mat^{(-\ell)}_i$.
It remains to show that $\frac1{Pr[x_i=\ell]}$ is linear in $c_i^{(\ell)}$.

For this, denote $Z:= q^{(\ell)}p_i^{(\ell)}$; $Y= \sum_{\tau\neq \ell}q^{(\tau)}\mat_i^{\tau\rightarrow\ell}$, and note that $Y$ is a constant. Then Eqs.\eqref{aeq:cil},\eqref{aeq:pxil} can be written as:
$$Pr[x_i=\ell] = Z + Y;~~~~c_i^{(\ell)}=\frac{Z}{Z+Y}.$$

By Lemma~\ref{lemma:YZ}, 
$$\frac1{Pr[x_i=\ell]} = \frac{1}{Z+Y} =  \frac{-1}{Y} \cdot \frac{Z}{Z+Y} + \frac{1}{Y} =  \frac{-1}{Y} \cdot c_i^{(\ell)} + \frac{1}{Y},$$
so indeed this part of the expression is linear as well. We now join the two parts to find the coefficients of the linear function: 
\begin{small}
\begin{align*}
     Pr&[x_{i'}=\ell|x_i=\ell]  = p^{(\ell)}_{i'} c_i^{(\ell)} + \frac1{Pr[x_i=\ell]}\sum_{\tau\neq \ell}q^{(\tau)} M_i^{\tau\rightarrow \ell} M_{i'}^{\tau\rightarrow \ell}\\
     &=  p^{(\ell)}_{i'} c_i^{(\ell)} + ( \frac{-1}{Y} \cdot c_i^{(\ell)} + \frac{1}{Y}) \sum_{\tau\neq \ell}q^{(\tau)} M_i^{\tau\rightarrow \ell} M_{i'}^{\tau\rightarrow \ell}\\
     &= \left( p^{(\ell)}_{i'} - \frac{\sum_{\tau\neq \ell}q^{(\tau)} M_i^{\tau\rightarrow \ell}M_{i'}^{\tau\rightarrow \ell}}{\sum_{\tau\neq \ell}q^{(\tau)}\mat_i^{\tau\rightarrow\ell}}\right)c_i^{(\ell)} \\
     &~~~+  \frac{\sum_{\tau\neq \ell}q^{(\tau)} M_i^{\tau\rightarrow \ell}M_{i'}^{\tau\rightarrow \ell}}{\sum_{\tau\neq \ell}q^{(\tau)}\mat_i^{\tau\rightarrow\ell}}, 
\end{align*}
\end{small}
so $\alpha^{(\ell)}_{i'} = p^{(\ell)}_{i'} - \frac{\sum_{\tau\neq \ell}q^{(\tau)} M_i^{\tau\rightarrow \ell}M_{i'}^{\tau\rightarrow \ell}}{\sum_{\tau\neq \ell}q^{(\tau)}\mat_i^{\tau\rightarrow\ell}}$ and $\beta^{(\ell)}_{i'} = \frac{\sum_{\tau\neq \ell}q^{(\tau)} M_i^{\tau\rightarrow \ell}M_{i'}^{\tau\rightarrow \ell}}{\sum_{\tau\neq \ell}q^{(\tau)}\mat_i^{\tau\rightarrow\ell}}$.
\end{proof}
Now, as in \cite{meir2023easy}, we are ultimately interested in the expected [conditional] similarity to a \emph{random worker}, which is still linear in the [conditional] competence. When sampling a random type $\mat_{i'}$ from an arbitrary distribution, we get that $\alpha^{(\ell)}_{i'},\beta^{(\ell)}_{i'}$ are random variables. We denote their expectations by $\alpha^{(\ell)},\beta^{(\ell)}$, respectively.  
\begin{corollary}
 $E[s(x_i,x_{i})|x_i=\ell] = \alpha^{(\ell)} \cdot c_i^{(\ell)} + \beta^{(\ell)}$.
 \end{corollary}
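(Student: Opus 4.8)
The plan is to first reduce the claim about the expected conditional average similarity $E[\pi_i^{(\ell)}]$ to a pointwise linearity statement for a single fixed co-labeler, and then recover the full result by linearity of expectation. Since $\pi_i^{(\ell)}$ averages the similarities $s(x_{i'},x_i)$ over co-labels with $x_i=\ell$, and for categorical data the Hamming similarity is simply the indicator that $x_{i'}=x_i$, I would first record that
$$E[\pi_i^{(\ell)}] = E_{i'}\big[\,Pr[x_{i'}=\ell \mid x_i=\ell]\,\big],$$
where the outer expectation is over the randomly drawn type $\mat_{i'}$. Thus it suffices to prove the Proposition: for each fixed $\mat_{i'}$, the inner probability $Pr[x_{i'}=\ell \mid x_i=\ell]$ is an affine function of $c_i^{(\ell)}$ whose slope and intercept depend only on the fixed quantities $q$, $\mat_{i'}$, and the partial type $\mat^{(-\ell)}_i$ --- but not on the free variable $p_i^{(\ell)}=\mat_i^{\ell\rightarrow\ell}$.

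To prove that pointwise claim, I would expand by conditioning on the true label $z^*$ and invoke conditional independence to drop $x_i$ from the inner conditioning, so that $Pr[x_{i'}=\ell\mid x_i=\ell,z^*=\tau]=\mat_{i'}^{\tau\rightarrow\ell}$. Splitting off the $\tau=\ell$ term gives $p_{i'}^{(\ell)}\,Pr[z^*=\ell\mid x_i=\ell]=p_{i'}^{(\ell)}\,c_i^{(\ell)}$, which is manifestly linear in $c_i^{(\ell)}$ because $\mat_{i'}$ is fixed. The work is concentrated in the remaining sum over $\tau\neq\ell$, where each posterior $Pr[z^*=\tau\mid x_i=\ell]=q^{(\tau)}\mat_i^{\tau\rightarrow\ell}/Pr[x_i=\ell]$ carries the normalizer $Pr[x_i=\ell]$ in its denominator, and that normalizer depends on the free variable $p_i^{(\ell)}$.

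The main obstacle, then, is to show that this reciprocal $1/Pr[x_i=\ell]$ is itself affine in $c_i^{(\ell)}$, despite $Pr[x_i=\ell]$ being affine in $p_i^{(\ell)}$ while $c_i^{(\ell)}$ is a ratio in $p_i^{(\ell)}$. The clean way to see it is to set $Z:=q^{(\ell)}p_i^{(\ell)}$ and $Y:=\sum_{\tau\neq\ell}q^{(\tau)}\mat_i^{\tau\rightarrow\ell}$, where $Y$ is a constant once $\mat^{(-\ell)}_i$ is fixed. Then $Pr[x_i=\ell]=Z+Y$ and $c_i^{(\ell)}=Z/(Z+Y)$, and the elementary identity $\tfrac{1}{Z+Y}=-\tfrac1Y\cdot\tfrac{Z}{Z+Y}+\tfrac1Y$ rewrites the reciprocal as $-\tfrac1Y\,c_i^{(\ell)}+\tfrac1Y$. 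Since every factor $q^{(\tau)}\mat_i^{\tau\rightarrow\ell}\mat_{i'}^{\tau\rightarrow\ell}$ in the $\tau\neq\ell$ sum is a constant, this makes the entire sum affine in $c_i^{(\ell)}$. This linearization of the reciprocal is the real content of the argument: it is precisely what lets the conditional AK principle survive the failure of cross-worker independence that conditioning on a reported label would otherwise introduce.

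Finally I would combine the two affine pieces, collect coefficients to obtain $\alpha^{(\ell)}_{i'}=p_{i'}^{(\ell)}-\tfrac1Y\sum_{\tau\neq\ell}q^{(\tau)}\mat_i^{\tau\rightarrow\ell}\mat_{i'}^{\tau\rightarrow\ell}$ and $\beta^{(\ell)}_{i'}=\tfrac1Y\sum_{\tau\neq\ell}q^{(\tau)}\mat_i^{\tau\rightarrow\ell}\mat_{i'}^{\tau\rightarrow\ell}$, and then take the expectation over the random type $\mat_{i'}$. By linearity of expectation this yields $E[\pi_i^{(\ell)}]=\alpha^{(\ell)}c_i^{(\ell)}+\beta^{(\ell)}$ with $\alpha^{(\ell)}:=E_{i'}[\alpha^{(\ell)}_{i'}]$ and $\beta^{(\ell)}:=E_{i'}[\beta^{(\ell)}_{i'}]$, as required.
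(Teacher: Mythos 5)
Your proposal is correct and follows essentially the same route as the paper: reduce $E[\pi_i^{(\ell)}]$ to $E_{i'}[\Pr[x_{i'}=\ell\mid x_i=\ell]]$, prove the pointwise affine dependence on $c_i^{(\ell)}$ via the $Z,Y$ substitution and the identity $\tfrac{1}{Z+Y}=-\tfrac1Y\cdot\tfrac{Z}{Z+Y}+\tfrac1Y$, and conclude by linearity of expectation over the random type $\mat_{i'}$, with the same coefficients $\alpha^{(\ell)}_{i'},\beta^{(\ell)}_{i'}$ the paper derives. No gaps.
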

 This follows immediately from linearity of (conditional) expectation. 
\begin{remark}
The higher $\alpha^{(\ell)}$ is w.r.t $\beta^{(\ell)}$, the better we can separate good workers from poor ones. Note that a sufficient condition for $\alpha^{(\ell)}$ to be positive, is that $E_{\mat_{i'}}[p_{i'}^{(\ell)}]$ is higher than $E_{\mat_{i'}}[\mat_{i'}^{\tau\rightarrow \ell}]$ for all $\tau\neq \ell$. I.e., that the overall competence to identify $\ell$ is higher than the overall chance to incorrectly report a label as $\ell$.  \rmr{add proof?}
\end{remark}

\subsection{Implementation Notes}

\an{We might want to move all the implementation-related details to Section 6 on Experiment}

\an{Update the algorithm name: \POAK}

Formally, denote by $M_i^{(\ell)}:= \{j\in M: x_{ij}\in \calX^{(\ell)}\}$; $M_{ii'}^{(\ell)}:= M_i^{(\ell)} \cap M_{i'}$; and $M^{(\ell),*}_i:= M^{(\ell)} \cap M^*_i$.

Our \POAK algorithm essentially runs Alg.~\ref{alg:OAK} once for every label type $\ell$, each time using only the part of the dataset where $i$ reported a type~$\ell$ label (thus replacing $M_{ii'}$ and $M^*_i$ sets in the Alg.~\ref{alg:OAK} with the respective $M^{(\ell)}_{ii'}$ and $M^{(\ell),*}_i$), and obtain $\Theta = (\hat c^{(\ell)},m_i^{(\ell)})_{i\in N, \ell\leq k}$.

Similarly, the estimation component of \POAK uses $\hat c^{(\ell)}_i$ whenever $x_{ij}\in \calX^{(\ell)}$. Note that even though the theorem is stated only for categorical labels, our empirical results show that the linear relation holds approximately for other labels, when partitioned into arbitrary categories.

\paragraph{Smoothing}
One caveat is that even if the total amount of training data is quite large, some $(i,\ell)$ pairs may still have only few samples, or none at all. We therefore also calculate the average values $\hat c_i$ and $\hat c^{(\ell)}$ and use them as default values when there are not enough samples (so they factor in the calculation of $\hat C_j^{(\ell)}$ in the estimation component). See Alg.~\ref{alg:POAK_predict}







\section{Reducing Model Size with Item Response Theory}
\label{apx:IRT}

\rmr{If lack space perhaps move all IRT to appendix}
There is an inherent tradeoff in the size of the model, where larger models can catch finer patterns (e.g. accuracy that is label-specific) but also more prone to overfitting, and in general require more data to work well. 

One way to still take label-specific accuracy into account with a lower number of latent variables, is to impose independence between worker competence and the features of each label. Thus we can maintain a number of latent variables that is $O(n+k)$ rather than $O(n\times k)$.  The tricky part is that it is not a-priori obvious what label-specific features should be learned, and how to combine them with the workers' estimated competence. Fortunately, there is a rich theory focused on a very similar question, namely \emph{Item Response Theory}~\cite{hambleton1991fundamentals}.

\subsection{IRT Background} 
\label{sec:irt_background}

The original purpose of IRT is not machine learning or truth discovery, but rather the creation and evaluation of \emph{standardized tests}.

 IRT assumes each `item' (say, question in a test) has three key properties: difficulty, separation, and base rate; and that each responder has some level of competence. 
 
 Intuitively, the probability of a correct response depends on the above four elements as follows:
 \begin{itemize}
     \item The base-rate is the probability of a complete guess to be correct (e.g. $0.25$ in a 4-category multiple choice question);
     \item The probability is increasing with the competence and decreasing with the difficulty;
     \item High separation means a larger difference between competent and incompetent responders. 
 \end{itemize}
 While one could think of many ways to combine the four parameters into a probabilistic formula, the standard IRT formula has the following form:

 \begin{equation}
     \label{eq:IRT}\Pirt(x_{ij})=p^0_j+ \frac{1-p^0_j}{1+\exp(b_j(d_j-c_i))},
 \end{equation}
 where $d_j, b_j, p^0_j$ correspond to item~$j$'s difficulty, separation, and base rate; and $\Pirt$ is the probability that answer $x_{ij}$ is correct. Note that the range of $d_j, c_i$ is the entire real line, and that $\Pirt$ as a function of $d_j-c_i$ has an increasing logit form.
 
A level-2 and level-1 IRT models are simplified versions where the base-rate or the separation or both are assumed to be identical for all items.
 
 Note that if item parameters are known (e.g. a test was written in a particular way), then responders' parameters can be estimated, and it is possible that people who answered correctly fewer questions turn out to be more competent (as they answered harder and/or more separating questions). 
 
 It is more difficult to estimate all parameters together~\cite{BakerBook2004item}, but note that this task as well requires as input the correct answers (or at least which responders correctly answered each question), and typically many responders per questions whereas we have at most 3 and sometimes just 1.

 

\subsection{IRT Model}
For ease of exposition, suppose we deal with categorical questions, as in the original motivation of IRT (we later explain how to very easily generalize this approach). 

The key assumption we make in this section, is that the probability of a correct answer follows Eq.~\eqref{eq:IRT}.

However, in contrast to the way IRT is used in standardized tests, we attach the three `item' parameters  not to questions but to \emph{reported labels}. Thus the $\Pirt$ in our model is the estimated probability that $x_i$ is correct, conditional on $x_i=\ell$.

Clearly not every model (and not even all conditional accuracy parameters $c^{(\ell)}_i$) can be captured by IRT parameters, simply since there are only $n+3k$ of those. However we next show that if each worker is characterized by a single accuracy parameter (as in most of the truth discovery literature), then a level-1 IRT model can capture both accuracy and priors over labels. 

Note that in a accuracy-only model, we have $\mat_i^{\ell\rightarrow \ell}=p_i$ for all $\ell$, and $\mat_i^{\tau\rightarrow \ell}=(1-p_i)/(k-1)$ for all $\tau \neq \ell$.

 

\begin{theorem}
An accuracy-only noise model with arbitrary priors is captured exactly by a level-1 IRT model. 
\end{theorem}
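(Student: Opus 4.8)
The plan is to exhibit explicit level-1 IRT parameters and verify directly that they reproduce every conditional accuracy $c_i^{(\ell)}$. First I would write $c_i^{(\ell)}$ in closed form under the accuracy-only assumptions $\mat_i^{\ell\to\ell}=p_i$ and $\mat_i^{\tau\to\ell}=(1-p_i)/(k-1)$ for $\tau\neq\ell$. Substituting these into Eqs.~\eqref{aeq:cil}--\eqref{aeq:pxil} and using $\sum_{\tau\neq\ell}q^{(\tau)}=1-q^{(\ell)}$ yields, after clearing the factor $k-1$,
$$c_i^{(\ell)} = \frac{q^{(\ell)}p_i}{q^{(\ell)}p_i + \frac{1-p_i}{k-1}(1-q^{(\ell)})} = \frac{q^{(\ell)}p_i(k-1)}{q^{(\ell)}p_i(k-1)+(1-p_i)(1-q^{(\ell)})}.$$

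Next I would write this as $c_i^{(\ell)}=1/(1+B/A)$ with $A=q^{(\ell)}p_i(k-1)$ and $B=(1-p_i)(1-q^{(\ell)})$, and note that $c_i^{(\ell)}$ is therefore a logistic function of $\log(B/A)$. The crucial observation is that this log-odds splits additively into a term depending only on the label type and a term depending only on the worker:
$$\log\frac{B}{A}=\log\frac{1-q^{(\ell)}}{q^{(\ell)}(k-1)}-\log\frac{p_i}{1-p_i}.$$
This is exactly the additive structure a Rasch (level-1) model imposes, in which only a per-item difficulty varies while separation and base rate are shared across items.

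I would then set the shared base rate $p^0=0$ and shared separation $b=1$ (both constant in $\ell$, as a level-1 model requires) and define the per-label difficulty $d_\ell:=\log\frac{1-q^{(\ell)}}{q^{(\ell)}(k-1)}$ and the per-worker competence $c_i:=\log\frac{p_i}{1-p_i}$. Plugging these into the level-1 specialization of Eq.~\eqref{eq:IRT}, namely $\Pirt=1/(1+\exp(d_\ell-c_i))$, gives $\exp(d_\ell-c_i)=B/A$ and hence $\Pirt=A/(A+B)=c_i^{(\ell)}$ for every $i$ and $\ell$, which is the claim.

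The argument is elementary, so I do not expect a genuine obstacle; the only real content is spotting the reparametrization --- that the log-odds of $c_i^{(\ell)}$ separate into an $\ell$-only term and an $i$-only term --- since this is precisely what lets a single shared separation and a single shared base rate suffice. I would close by checking the boundary cases ($p_i\in\{0,1\}$ sending $c_i\to\pm\infty$ and $c_i^{(\ell)}\to\{0,1\}$, matching the logistic limits) under the standing assumptions $0<q^{(\ell)}<1$ and $k\ge 2$, which keep $d_\ell$ finite.
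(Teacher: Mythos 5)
Your proof is correct and follows essentially the same route as the paper's: you choose the identical parameters $c_i=\log\frac{p_i}{1-p_i}$ and $d^{(\ell)}=\log\frac{1-q^{(\ell)}}{q^{(\ell)}(k-1)}$ (the paper writes them as $-\log(\frac{1}{p_i}-1)$ and $\log((\frac{1}{q^{(\ell)}}-1)\frac{1}{k-1})$) and verify that the closed form of $c_i^{(\ell)}$ matches the level-1 logistic expression. Your observation that the log-odds split additively into an $\ell$-only and an $i$-only term is a pleasant conceptual gloss on the same computation, but it is not a different argument.
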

\begin{proof}
We define the logit function $g(x):=\frac{1}{1+exp(x)}$. Note that in a level-1 IRT model, Eq.~\eqref{eq:IRT} reduces to $\Pirt(x_{i}^{(\ell)}) = g(d^{(\ell)}-c_i)$.

Am accuracy-only model is given by $n+k$ parameters: worker accuracies $(p_i)_{i\leq n }$, and label priors $(q^{(\ell)})_{\ell \leq k}$. 

We define the IRT parameters as follows:
\begin{itemize}
    \item $c_i := -\log(\frac{1}{p_i}-1)$; and
    \item $d^{(\ell)}:= \log((\frac{1}{q^{(\ell)}}-1)\frac{1}{k-1})$.
\end{itemize}

It is left to show that the original decomposable model and the IRT model always provide the same estimation, i.e. that 
$$g(d^{(\ell)} - c_i) = Pr[z^*=\ell |x_i = \ell] .$$

To that end:
\begin{align*}
&Pr[z^*=\ell |x_i = \ell]= \frac{q^{(\ell)}p_i}{Pr[x_i=\ell]}= \frac{q^{(\ell)}p_i}{\sum_{\ell'} q^{(\ell')}Pr[x_i=\ell|z^*=\ell']}\\
&=  \frac{q^{(\ell)}p_i}{q^{(\ell)}p_i+\sum_{\ell'} q^{(\ell')}\mat_i^{\ell'\rightarrow \ell}}=  \frac{q^{(\ell)}p_i}{q^{(\ell)}p_i+\sum_{\ell'\neq \ell} q^{(\ell')}\frac{1-p_i}{k-1}}\\
&= \frac{q^{(\ell)}p_i}{q^{(\ell)}p_i+\frac{1-p_i}{k-1}\sum_{\ell'\neq \ell} q^{(\ell')}}
= \frac{q^{(\ell)}p_i}{q^{(\ell)}p_i+\frac{1-p_i}{k-1}(1- q^{(\ell)})} \\
&=\frac{1}{1+\frac{1}{k-1}(1-p_i)(1- q^{(\ell)})/q^{(\ell)}p_i}\\
&= \frac{1}{1+\frac{1}{k-1}\frac{1-p_i}{p_i} \frac{1- q^{(\ell)}}{q^{(\ell)}}}
\end{align*}
Where on the other hand,
\begin{align*}
    \Pirt&(x_i^{(\ell)}) = g(d^{(\ell)}-c_i) = \frac{1}{1+\exp(d^{(\ell)}-c_i)} \\
    &= \frac{1}{1+\exp(\log((\frac{1}{q^{(\ell)}}-1)\frac{1}{k-1})+\log(\frac{1}{p_i}-1))} \\
    &= \frac{1}{1+\exp(\log((\frac{1}{q^{(\ell)}}-1)\frac{1}{k-1}(\frac{1}{p_i}-1)))} \\
    &= \frac{1}{1+(\frac{1}{q^{(\ell)}}-1)\frac{1}{k-1}(\frac{1}{p_i}-1)} \\
    &= \frac{1}{1+\frac{1}{k-1}\frac{1-p_i}{p_i} \frac{1- q^{(\ell)}}{q^{(\ell)}}}=Pr[z^*=\ell |x_i = \ell],
\end{align*}
as required.
\end{proof}
\rmr{add proof}
\subsection{Implementation}
We implemented a third variant of our \OAK algorithm, called \POAK-IRT (or $\POAKi$) as follows. In the learning component, 
 the base rates $p_0^{(\ell)}$ are computed directly from the data. 
We then learn a full $n\times k$ model $\Theta$ using the learning component of $\POAK$. Finally, we fit the remaining $2k+n$ latent variables of a level-3 IRT model $\Theta^I$ to the accuracy matrix of $\Theta$. There are many ways to do so, but we use several steps of linear regression,  

In the estimation component, the \POAKi algorithm estimates accuracy according to Eq.~\eqref{eq:IRT}.

\subsection{\POAK+IRT}
Due to the obvious generalization tradeoff involved in restricting the parameter space, the \POAKi  algorithm reaches good performance much faster (i.e. with a smaller training set) than \POAK, but its performance is bounded. 

The \POAK+IRT algorithm combines the two by using the \POAK estimation, with \POAKi estimation as regularizer with a fixed weight for linear smoothing.  Thus intially the \POAKi estimation is used and for categories on which there is ample training data, \POAK kicks in for higher accuracy. 

This is the algorithm used in Fig.~\ref{fig:regularization}
\rmr{explain application to general labels}

\section{Multiple Decision Points}\label{apx:acc_aggregated}
In the main text we only explained how to use our accuracy estimation algorithm at a single decision point---after obtaining the first annotation of each item.

Of course, after obtaining a second annotation, we could also try and decide whether this is sufficient or a third one is needed (if exists). However naively applying the same algorithm is a problem since it is set to evaluate \emph{a single annotation by a single labeler}. Clearly, there is relevant information in all available annotations.

For example, suppose that we collected two labels for item $j$, namely $x_{j1}, x_{j2}$ that are identical but each has a very low accuracy estimate (either because the reported label has a low prior or because both workers are often inaccurate). The fact that the labels are identical though should also be taken into account, and perhaps to a lesser extent if they are only similar. 

Intuitively, we want to `boost' the accuracy estimate of aggregated labels if they are similar, on average, to all the reported labels, Likewise we would like to penalize the accuracy estimation if reported labels strongly diverge.

To that end, we add a small number of latent variables to the model, two for each decision point: $(\delta_i,\epsilon_i)_{i\leq t}$ where $t=\max_j|N_j|$.

Intuitively, given an aggregated label $\hat z_j=agg(x_{j1},\ldots,x_{ji})$ and its naive accuracy estimate $\hat C_j$ (obtained from Alg.~\ref{alg:OAK_predict}), we use $\delta_i,\epsilon_i$  to apply a non-linear shift on $\hat C_j$.

Formally, we set
$$\tilde{C_j}:=g(g^{-1}(\hat C_j) + \delta_i + \overline s_j \cdot \epsilon_i),$$
where $g(\cdot)$ is the logit function, and $\overline s_j := \frac{1}{i}\sum_{i'=1}^i s(\hat z_j, x_{i'j})$ is the average similarity of the aggregated label $\hat z_j$ to all reported labels. 

Note that this means we apply a linear adjustment of the accuracy estimation in the space of IRT parameters, i.e. we linearly adjust the term $b^{(\ell)}(d^{(\ell)}-c_i)$ before casting it again to $[0,1]$ with the logit function.

The fact that $\delta_i,\epsilon_i$ are a linear transformation (in another space) means we can easily learn them in the learning component by finding an approximate linear transformation from $ \frac{1}{i}\sum_{i'=1}^i s(z^*_j, x_{i'j})$ to the difference $g^{-1}(\hat C_j)-g^{-1}(c^*_j)$ over items with auditor label in the training set.

\paragraph{Multiple thresholds}
We note that estimating accuracy alone is not enough, since we now also have multiple decision thresholds. In contrast to the case of a single decision threshold where typically our algorithms \emph{always} beat the baseline, setting multiple thresholds may hurt performance. 

In Fig.~\ref{fig:2D} we show the result of applying two decision points \emph{with the same threshold} in the two datasets for which we have auditor labels.

We can see that while there is a nontrivial improvement in the Topics dataset, the second decision point hurts performance in the Bounding Box dataset. This may seem strange as we can always decide to \emph{always} request another label at the second decision point (making it moot). However for this we must allow different thresholds. So what the figure shows is that using an identical threshold in both steps can be suboptimal.

\begin{figure*}
    \centering
    \includegraphics[width=0.6\linewidth]{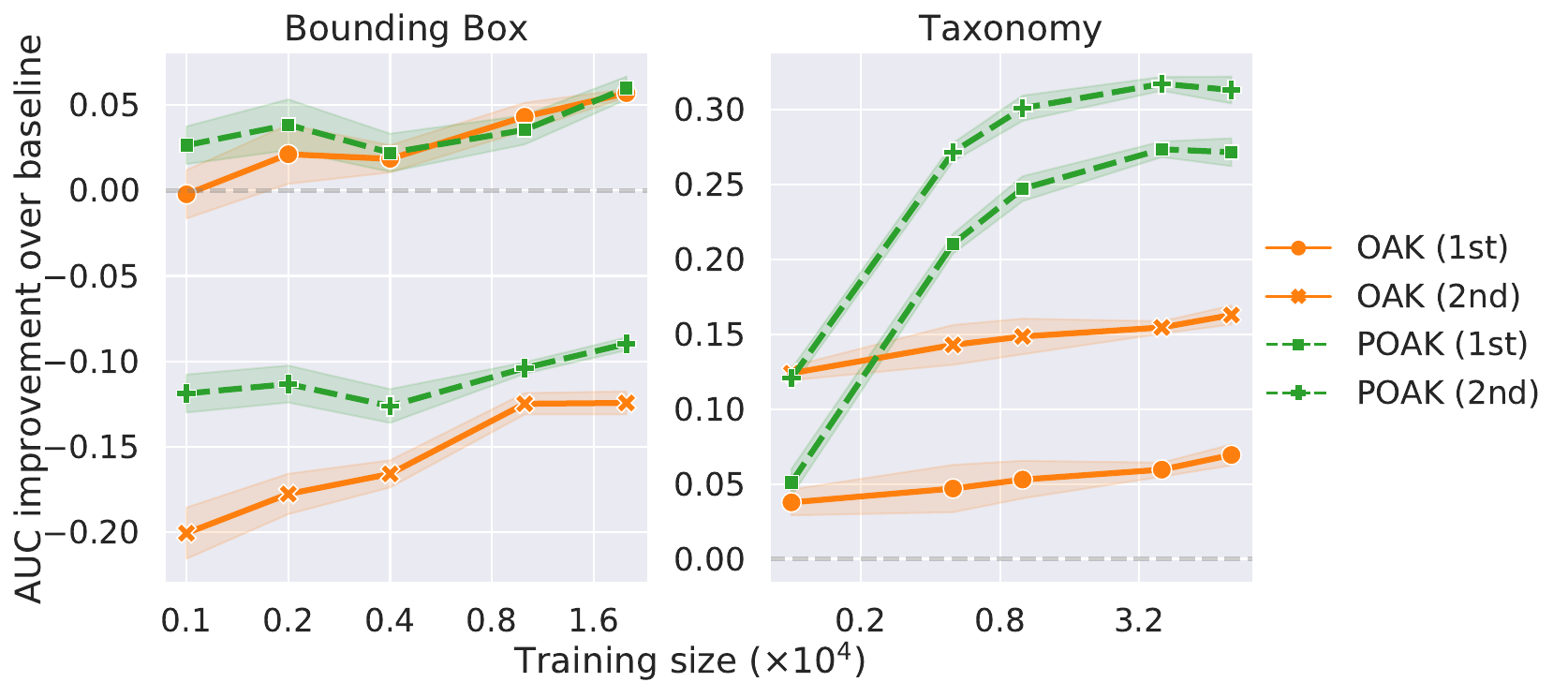} 
    \caption{The effect of adding another decision point \emph{with the same threshold}, after obtaining a second annotation.}
    \label{fig:decision}\label{fig:2D}
\end{figure*}

\section{More Details on Empirical Results}\label{apx:emp}

\paragraph{Datasets} We used four datasets collected at a major social networking platform, with both simple and complex labels.

\begin{itemize}
    \item \KEYPOINT: In this dataset, each label is a 2-dimensional coordinate pair to mark a particular body part. Aggregation is performed using (weighted or uniform) average over labels. 
    
    \item \TOPIC: Each annotation is a (possibly empty) subset of 26 predefined topics, which is a part of a large taxonomy. The similarity of two annotations is their Jaccard similarity and aggregation is performed by majority voting on each topic. For annotation partition, we assign each singleton to a type and group all non-singletons into a separate type. 
    
    \item \TREEPATH: Each annotation is a path in a 3-level tree in which the first, second, and leaf levels have $200$, $1500$, and $6000$ nodes, respectively. The similarity measure assigns a score of 1.0, 0.75, 0.5, or 0.0, depending on how the two annotations' first-level nodes differ. Aggregation is by ``hierarchical majority'' starting from the root. For annotation partition we used the first level of the tree.
    
    \item \BOUNDINGBOX: Every annotation is a set of rectangular bounding boxes around people in an underlying image. The similarity of two sets is by the Jaccard similarity of the induced bitmaps, where every bit is `1' if it is contained in some bounding box and `0' otherwise.  Aggregation is by taking a majority vote on every bit. For the annotation partition, we simply used the number of distinct bounding boxes in each annotation, up to 8. However we note that there were very few labels with more than 4 boxes.
\end{itemize}

Unfortunately, all datasets  are proprietary and we have no permission to share them. However the algorithms can be easily reproduced. 
The run-time of all algorithms on the heaviest dataset (20K Bounding Box items, each containing 3 bitmaps of size 200x200) is under 5 minutes.

\paragraph{Calibration}\label{apx:cal}
In datasets where we have some auditor labels, using them for calibration provides a small yet positive effect on accuracy. See Fig.~\ref{fig:calibration}.  Note that this is only true for \POAK. In the simple \OAK algorithm calibration would simply change the threshold required to reach every point on the performance curve, but would not change the curve itself.

\begin{figure}
    \centering
    \includegraphics[width=0.5\linewidth]{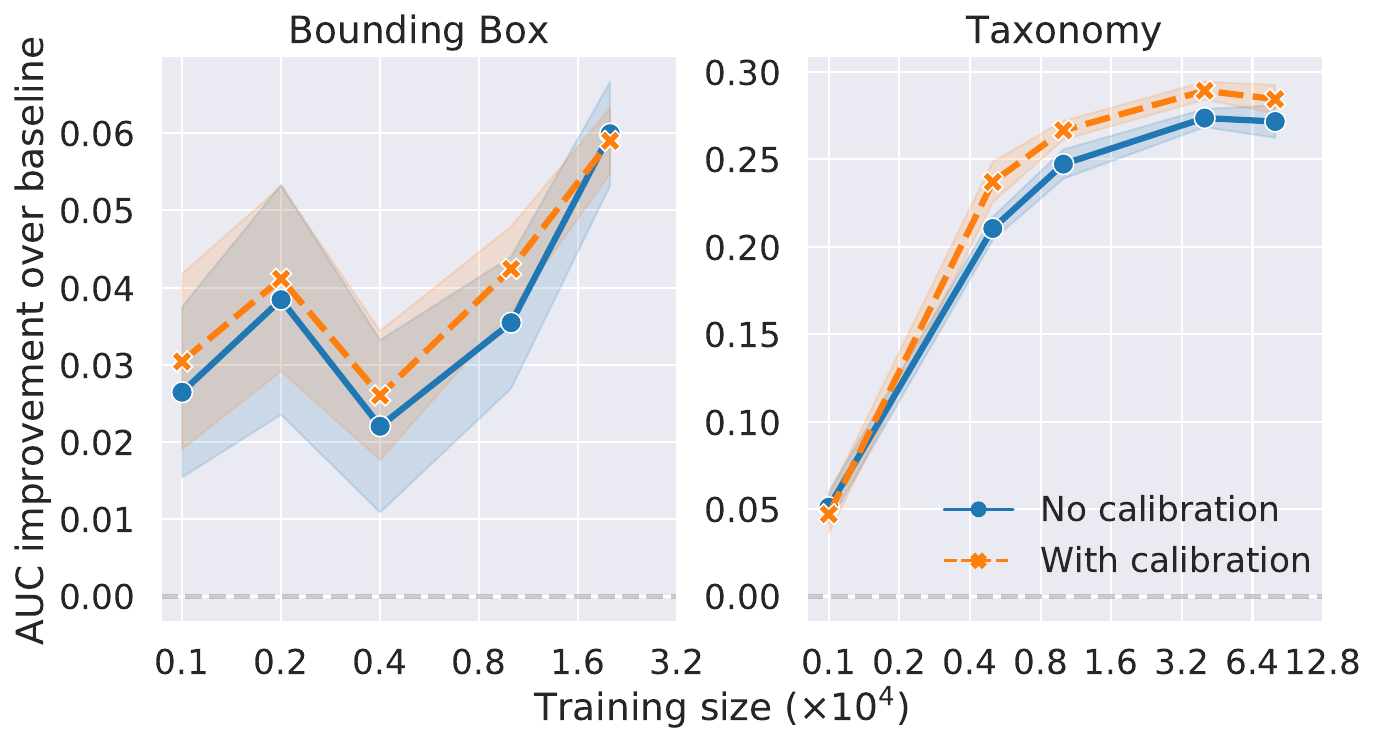}
    \includegraphics[width=0.45\linewidth]{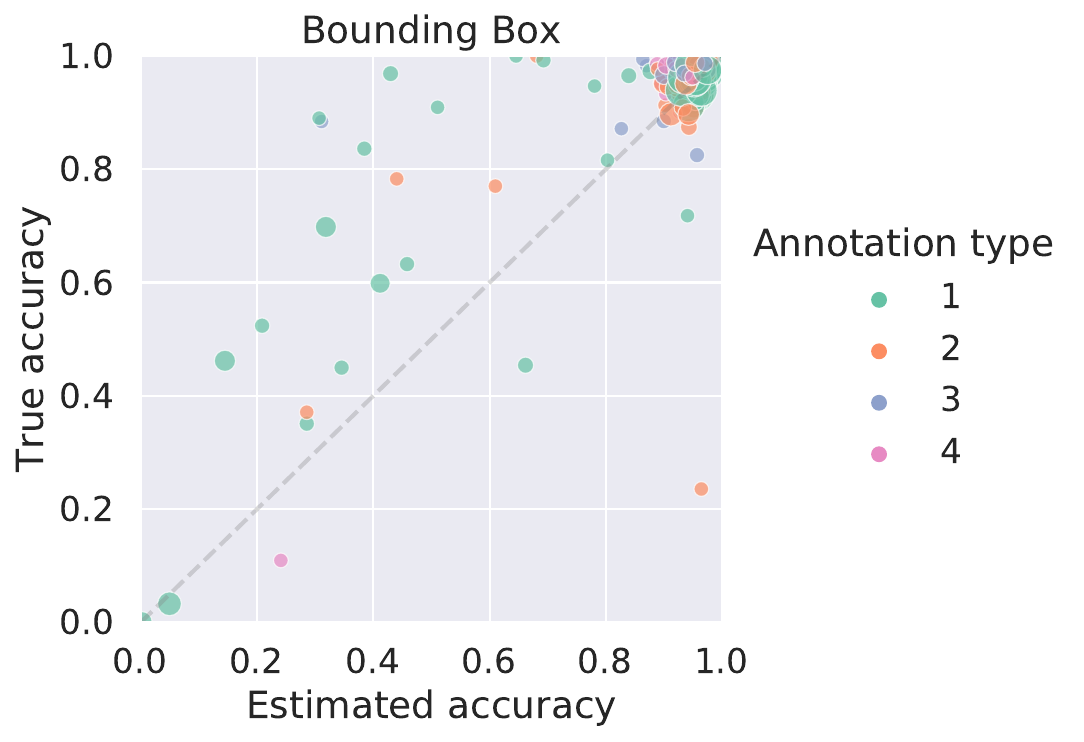}
    \caption{Left: The effect of calibration using available ground truth training data. Right: The full scatterplot of the blowup shown in Fig.~\ref{fig:cal_scatter}.}
    \label{fig:calibration}
\end{figure}

\end{document}